\documentclass[12pt]{article}
\usepackage{amsmath}
\usepackage{amsthm}
\usepackage{graphicx,psfrag,epsf}
\usepackage{enumerate}
\usepackage{natbib}
\usepackage[utf8]{inputenc} 
\usepackage{hyperref}       
\usepackage{url}            
\usepackage{booktabs}       
\usepackage{amsfonts}       
\usepackage{nicefrac}       
\usepackage{microtype}      
\usepackage{authblk}
\usepackage{natbib}
\usepackage{doi}
\usepackage{float}
\usepackage{algorithmic}
\usepackage{xcolor}
\usepackage[toc,page]{appendix}

\newtheorem{theorem}{Theorem}[section]
 
\newtheorem{lemma}[theorem]{Lemma}

\newtheorem{assumption}[theorem]{Assumption}

\begin{document}


  \title{Scalable $k$-d trees for distributed data}
  \author{Aritra Chakravorty, William S. Cleveland, and Patrick J. Wolfe}
  \affil{\texttt{\{chakrav0,wsc,patrick\}@purdue.edu}}
\maketitle

\begin{abstract}
Data structures known as $k$-d trees have numerous applications in scientific computing, particularly in areas of modern statistics and data science such as range search in decision trees, clustering, nearest neighbors search, local regression, and so forth. In this article we present a scalable mechanism to construct $k$-d trees for distributed data, based on approximating medians for each recursive subdivision of the data. We provide theoretical guarantees of the quality of approximation using this approach, along with a simulation study quantifying the accuracy and scalability of our proposed approach in practice.
\end{abstract}

\noindent%
{\it Keywords:} Distributed computing, high-dimensional feature space, nearest neighbor search, statistical scalability, tree data structures

\vspace{\baselineskip}%
\noindent%
{\it AMS subject classifications:} 
62R07, 68P05, 68P10, 68T09, 68W10
\section{Introduction}\label{sec:introduction}

Local neighborhood queries, as a form of proximity search, are of fundamental interest in modern scientific computing for large data sets. Indeed, many areas of statistics and data science such as range search in a statistical decision tree, clustering, nearest neighbors search, local regression, and so on make use of such ideas. However, when data sets are stored in a contemporary distributed environment such as Spark \citep{Zaharia2016}, identifying the local neighborhood around a given point in a multidimensional setting can be challenging: There is less scope of interaction between subsets of a distributed data set, and often it becomes increasingly difficult to implement more mathematically or computationally sophisticated algorithms. A significant current literature aims to address this problem. For example, low-distortion embedding \citep{Ailon2009} can speed up search algorithms in approximate nearest neighbors, whereas parallel algorithms for nearest neighbor search \citep{Hu2015,Chen2018,Pinkman2020} can offer increased scalability, even potentially in high-dimensional settings \citep{Xiao2016}.

The majority of such approaches involve the construction of a $k$-d tree for multivariate data as a generic data structure \citep{Bentley1975}. Similarly to a binary tree, a $k$-d tree divides the data recursively at each level of the tree along an axis called the key for that level. However, as opposed to a standard binary tree which has only one key for every level of the tree, in a $k$-d tree uses $k$ keys and cycles through these keys for successive levels of the tree. For example, to build a $k$-d tree from three-dimensional points comprising $(x, y, z)$ coordinates, keys would by default be cycled as $x, y, z, x, y, z, \dots$, for successive levels of the $k$-d tree. Every non-leaf node divides the space into two parts, known as half-spaces. This node acts as a boundary point for the two half-spaces. Points to the left of this boundary along the key coordinate are represented by the left sub-tree of that node, and points right are represented by the right sub-tree. So, for example, if at a given level the $x$ coordinate is chosen as key, then all points in the sub-tree with a smaller $x$ value than the node will appear in the left sub-tree, and all points with larger $x$ value will be in the right sub-tree.

For this reason, $k$-d trees have seen wide use as efficient tools to subdivide point clouds of input data into sub-spaces of nearly equal volume. There is a more logically efficient but computationally complex scheme for cycling the keys that chooses the coordinate that has the widest dispersion or largest variance to be the key for a given level \citep{Friedman1977}, with the possibility of non-unique keys. Observe that, since the keys are different at different levels of a $k$-d tree, it is impossible to perform any re-balancing techniques, such as are used to build so-called AVL trees \citep{Velskii1962} or red--black trees \citep{Bayer1972, Guibas1978}. 

\begin{figure}[!htbp]
\begin{center}
\includegraphics[width=5in]{./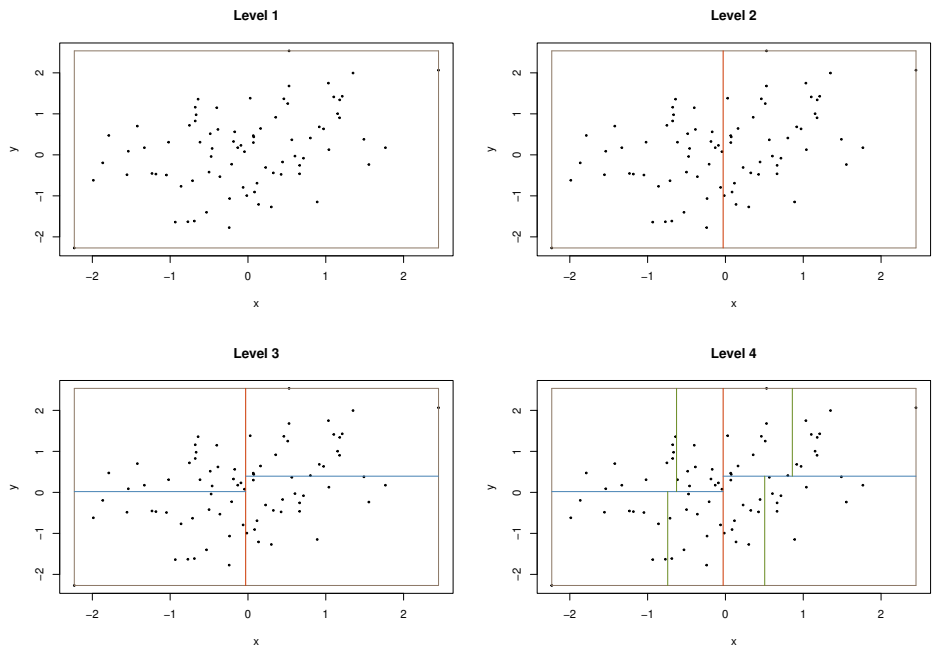}
\end{center}
\caption{\label{fig:kdtree}Canonical $k$-d tree construction. Initially at level $1$, all data points lie within the rectangular cell shown. The first splitting hyperplane is determined by the median of the $x$ coordinates of all points; it produces two new cells. At level $2$, we look at all points in the left (respectively\ right) cell, with the splitting hyperplane now made at the median of the $y$ coordinates of the points in this cell. At level $3$, again we choose the corresponding $x$ coordinates to make the split. Median computation means that cell counts are almost equal across each level, ensuring that the $k$-d tree is balanced.}
\end{figure}

As shown in Figure \ref{fig:kdtree}, the canonical approach to constructing a $k$-d tree computes medians of subsets of the input data at each level of the tree. As we move down the levels, we choose data variables periodically to select splitting hyperplanes. We split along the median of the selected variable. Choosing the medians as splitting hyperplanes constructs a balanced $k$-d tree, whereby each leaf node is approximately the same distance from the root. If medians can be found in time $O(n)$, it is possible to construct a $k$-d tree in time $O(n\log n)$ \citep{Bentley1975}. This cannot be done with exact sorting algorithms such as quick-sort, merge-sort, or heap-sort, with the latter leading instead to $O(n\log^2n)$ time for $k$-d tree construction \citep{Wald2006}. Note that the use of medians also means that the entire input data set must be read upon algorithm initialization.

Our main interest here lies developing in a parallel-computation approach to approximating medians, which also has the advantage of being easily implemented using typical programming models for contemporary distributed computing environments (e.g., map-reduce \citep{Dean2008}). Other authors have recently explored a batched incremental approach of constructing $k$-d trees \citep{Bleloch2018}, an adaptive split and sort strategy for parallel $k$-d trees construction \citep{Wehr2018}, and a construction based on pre-sorted results \citep{Brown2015b, Cao2020}. Other pre-sorting algorithms are common---to avoid re-sorting at each level of the tree---but are known to have poor worst-case performance despite a best-case complexity of $O(n\log n)$ \citep{Wald2006}. Another approach is to pre-sort points in each of $k$ dimensions, and then maintain the order of these $k$ sorts when building a $k$-d tree, achieving a worst-case complexity of $O(kn\log n)$ \citep{Procopiuc2003, Brown2015}. Additional constructions predominate in the ray-tracing literature for computer graphics \citep{Hunt2006, Shevtsov2007, Zhou2008, Soupikov2008, Choi2010}.

This article introduces a scalable parallel algorithm to construct balanced $k$-d trees through the approximation of medians. We first describe the intuition underlying our approach in section \ref{sec:intution}, and then provide our main theoretical results in section \ref{sec:main}. Next, in section \ref{sec:algorithm}, we develop an algorithm for balanced approximate $k$-d tree construction based on these results, and in section \ref{sec:performance} we describe a simulation study quantifying accuracy and scalability. Finally, we conclude in section \ref{sec:disc} with a brief discussion.
 
\section{Intuition underlying $k$-d tree construction in parallel, distributed environments}\label{sec:intution}

Assume that we have a data set $X$, possibly large and distributed, and an index set $I$ such that the $i$th element of $X$ is denoted as $x_{\:i}$ for $i\in I$. We first describe the $k$-d tree construction procedure in terms of boundary points or vertices and neighborhood or cells. Let us consider the usual construction algorithm of a $k$-d tree. It is a recursive algorithm whereby in each recursion, we get a new level or set of nodes from the previous level or set of nodes. These nodes can be also thought in terms of vertices of neighborhoods. Suppose we are at a certain level $d$ of the $k$-d tree, and we want to get the vertices for the next level. We have a number of neighborhoods in this level, and if we get the new vertices corresponding to each of these neighborhoods, we can easily get all the sub-neighborhoods for the next level $d+1$. 
First, let us figure out how to get the vertex where we divide a neighborhood. Let a general neighborhood at depth $d$ be denoted as: $\big(\mathbf{a},\mathbf{b}\big)=\big((a_{\:1},b_{\:1}),\dots,(a_{\:p},b_{\:p})\big)$. Suppose at depth $d$ that we want divide the data along the $t$th coordinate. Then, we must compute the median $m_{\:t}(\mathbf{X},\mathbf{a},\mathbf{b})$ along this $t$th coordinate for all observations $\mathbf{x}_{\:i}$ that lie inside the interval $(\mathbf{a}, \mathbf{b})$. Let us denote the total number of observations $\mathbf{x}_{\:i}$ that lie inside the open interval $(\mathbf{a},\mathbf{b})$ by $\boldsymbol{\mathsf{1}}(\mathbf{X},\mathbf{a},\mathbf{b})$. Observe that we have the expression
\begin{align*}
\boldsymbol{\mathsf{1}}(\mathbf{X},\mathbf{a},\mathbf{b})=\sum_{i\in I}\mathsf{1}(\mathbf{a}<\mathbf{x}_{\:i}<\mathbf{b})=\sum_{i\in I}\left(\prod_{k=1}^p\mathsf{1}(a_{\:k}<x_{\:i,\:k}<b_{\:k})\right).
\end{align*}

For a general $(\mathbf{a},\mathbf{b})\subseteq(\mathbf{0},\mathbf{1})$ and $m\in\mathbb{R}$, let us define the $(p-1)$-dimensional set:
\begin{align*}
\mathcal{A}_{\:t}(m,\mathbf{a},\mathbf{b})=\big\{\mathbf{x}\in(\mathbf{0},\mathbf{1})\colon x_{\:t}=m,a_{\:k}<x_{\:k}<b_{\:k}\text{ for }1\leq k\leq p,k\neq t\big\}. 
\end{align*}
Let us also introduce $p$-dimensional vector $\mathbf{a}_{\:\text{-}t}(m)$ such that:
\begin{align*}
\tilde{\mathbf{a}}=\mathbf{a}_{\:\text{-}t}(m)\leftrightarrow\tilde{a}_{\:i}=a_{\:i}\text{ for }i\neq t, 1\leq i\leq p\text{ and }\tilde{a}_{\:t}=m.
\end{align*}
Observe $(\mathbf{a},\mathbf{b})$ is the disjoint union of $\big(\mathbf{a},\mathbf{b}_{\:\text{-}t}(m)\big)$, $\big(\mathbf{a}_{\:\text{-}t}(m),\mathbf{b}\big)$ and $\mathcal{A}_{\:t}(m,\mathbf{a},\mathbf{b})$. Since $m_{\:t}(\mathbf{X},\mathbf{a},\mathbf{b})$ the median along the $t$th coordinate for all the observations $\mathbf{x}_{\:i}$, that lie inside the interval $(\mathbf{a}, \mathbf{b})$, the two disjoint intervals: $\big(\mathbf{a},\mathbf{b}_{\:\text{-}t}\big(m_{\:t}(\mathbf{X},\mathbf{a},\mathbf{b})\big)\big)$ and $\big(\mathbf{a}_{\:\text{-}t}\big(m_{\:t}(\mathbf{X},\mathbf{a},\mathbf{b})\big),\mathbf{b}\big)$ should contain equal number of observations. 

Therefore, we can find the median $m_{\:t}(\mathbf{X},\mathbf{a},\mathbf{b})$ by solving the following equation for $m\in(a_{\:t},b_{\:t})$:
\begin{align*}
\boldsymbol{\mathsf{1}}\big(\mathbf{X},\mathbf{a},\mathbf{b}_{\:\text{-}t}(m)\big)=\boldsymbol{\mathsf{1}}\big(\mathbf{X},\mathbf{a}_{\:\text{-}t}(m),\mathbf{b}\big).
\end{align*}
We standardize the function $\boldsymbol{\mathsf{1}}(\mathbf{X},\mathbf{a},\mathbf{b})$ to $\bar{\boldsymbol{\mathsf{1}}}(\mathbf{X},\mathbf{a},\mathbf{b})=(1/|\mathbf{X}|)\cdot\boldsymbol{\mathsf{1}}(\mathbf{X},\mathbf{a},\mathbf{b})$. Observe that $m_{\:t}(\mathbf{X},\mathbf{a},\mathbf{b})$ is also a solution to the following equation for $m\in(a_{\:t},b_{\:t})$:
\begin{align*}
\bar{\boldsymbol{\mathsf{1}}}\big(\mathbf{X},\mathbf{a},\mathbf{b}_{\:\text{-}t}(m)\big)=\bar{\boldsymbol{\mathsf{1}}}\big(\mathbf{X},\mathbf{a}_{\:\text{-}t}(m),\mathbf{b}\big).
\end{align*}
Note that $\bar{\boldsymbol{\mathsf{1}}}(\mathbf{X},\mathbf{a},\mathbf{b})$ always lies inside $(0,1)$, as opposed to $\boldsymbol{\mathsf{1}}(\mathbf{X},\mathbf{a}, \mathbf{b})$, which almost surely grows with $|\mathbf{X}|$. With this split of $\left(\mathbf{a},\mathbf{b}\right)$ at the median $m(\mathbf{X}, \mathbf{a},\mathbf{b})$, we get sub-neighborhoods $\big(\mathbf{a}_{\:\text{left}},\mathbf{b}_{\:\text{left}}\big)=\big(\mathbf{a},\mathbf{b}_{\:\text{-}t}\big(m_{\:t}(\mathbf{X},\mathbf{a},\mathbf{b})\big)\big)$ and $\big(\mathbf{a}_{\:\text{right}},\mathbf{b}_{\:\text{right}}\big)=\big(\mathbf{a}_{\:\text{-}t}\big(m_{\:t}(\mathbf{X},\mathbf{a},\mathbf{b})\big),\mathbf{b}\big)$.

Now, we will show how the canonical $k$-d tree construction procedure can be identified in terms of the conditional medians $m_{\:t}(\mathbf{X},\mathbf{a},\mathbf{b})$. Recall our assumption that each observation of the data $\mathbf{X}$ has been scaled, so that they lie in the $p$-dimensional interval $(\mathbf{0},\mathbf{1})$, so that $0<x_{\:i,\:k}<1$ for $i\in I, 1\leq k\leq p$. We start our $k$-d tree construction algorithm with the single neighborhood $(\mathbf{0},\mathbf{1})=\big((0,1),(0,1),\dots,(0,1)\big)$. 

At level $1$, we pick the first coordinate $x_{\:1}$ as our splitting direction and split along $x_{\:1}$ at the median value $m_{\:1,\:1}(\mathbf{X})=m_{\:1}(\mathbf{X},\mathbf{0},\mathbf{1})$. This split produces two disjoint sub-neighborhoods of $(\mathbf{0},\mathbf{1})$: $\big(\big(0,m_{\:1,\:1}(\mathbf{X})\big),(0,1),\dots,(0,1)\big)=\big(\mathbf{0},\mathbf{1}_{\:\text{-}1}\big(m_{\:1,\:1}(\mathbf{X})\big)\big)$ and $\big(\big(m_{\:1,\:1}(\mathbf{X}),1\big),(0,1),\dots,(0,1)\big)=\big(\mathbf{0}_{\:\text{-}1}\big(m_{\:1,\:1}(\mathbf{X})\big),\mathbf{1}\big)$. The union of these sub-neighborhoods along with the $(p-1)$-dimensional plane $\mathcal{A}_{\:1}\big(m_{\:1,\:1}(\mathbf{X}),\mathbf{0},\mathbf{1}\big)$, is $\big(\mathbf{0},\mathbf{1}\big)$. 

At level $2$, we pick the second coordinate $x_{\:2}$ as our splitting direction and split each of these neighborhoods along $x_{\:2}$ at corresponding median values. Let us denote these two medians as $m_{\:2,\:1}(\mathbf{X})$ and $m_{\:2,\:2}(\mathbf{X})$. Then our resulting four subset neighborhoods after the second split are: $\big(\big(0,m_{\:1,\:1}(\mathbf{X})\big),\big(0,m_{\:2,\:1}(\mathbf{X})\big),\dots,(0,1)\big)$, $\big(\big(0,m_{\:1,\:1}(\mathbf{X})\big),\big(m_{\:2,\:1}(\mathbf{X}),1\big),\dots,(0,1)\big)$, $\big(\big(m_{\:1,\:1}(\mathbf{X}),1\big),\big(0,m_{\:2,\:2}(\mathbf{X})\big),\dots,(0,1)\big)$ and $\big(\big(m_{\:1,\:1}(\mathbf{X}),1\big),\big(m_{\:2,\:2}(\mathbf{X}),1\big),\dots,(0,1)\big)$. The union of these intervals, along with the $(p-1)$-dimensional planes $\mathcal{A}_{\:1}\big(m_{\:1,\:1}(\mathbf{X}),\mathbf{0},\mathbf{1}\big)$, $\mathcal{A}_{\:2}\big(m_{\:2,\:1}(\mathbf{X}),\mathbf{0},\mathbf{1}_{\:\text{-}1}\big(m_{\:1,\:1}(\mathbf{X})\big)\big)$ and $\mathcal{A}_{\:2}\big(m_{\:2,\:2}(\mathbf{X}),\mathbf{0}_{\:\text{-}1}\big(m_{\:1,\:1}(\mathbf{X})\big),\mathbf{1}\big)$, is easily verified to be $(\mathbf{0},\mathbf{1})$. 

In general we cycle through all the coordinates in this manner. Suppose at level $d$ we pick the coordinate $x_{\:t}$; then we start with $2^{d-1}$ number of neighborhoods and we split each of these neighborhoods in two subset neighborhoods along $x_{\:t}$ at the median values. Then, at level $d$ we get $2^{d-1}$ median values $m_{\:d,\:1}(\mathbf{X}),\dots,m_{\:d,\:2^{d-1}}(\mathbf{X})$. We will have $2^{d-1}$ splits at level $d$, producing $2^d$ neighborhoods before proceeding to level $d+1$. If we continue and construct a $k$-d tree of depth $D$, we will compute $2^D-1$ conditional median statistics $m_{\:1,\:1}(\mathbf{X})$; $m_{\:2,\:1}(\mathbf{X})$, $m_{\:2,\:2}(\mathbf{X})$; $\dots$; $m_{\:D,\:1}(\mathbf{X})$, $\dots,m_{\:D,\:2^{D-1}}(\mathbf{X})$ in the construction process. Observe that these $2^D-1$ statistics give us all necessary and sufficient information to construct the entire $k$-d tree. 

\section{Main results}\label{sec:main}

The $2^D-1$ conditional medians needed to construct a canonical $k$-d tree of depth $D$ cannot be computed exactly in parallel. Instead, using the techniques developed by \citet{Chakravorty2021}, we propose to approximate each such median $m_{\:d,\:k}(\mathbf{X})$ by the $J$th term of a sequence of statistics $\{\hat{m}_{\:d,\:k}^J(\mathbf{X})\}_{J=0}^\infty$ that are easily computed in parallel. Below we show $\hat{m}_{\:d,\:k}^J(\mathbf{X})$ to be the minimizer of an objective function which approximates the standardized sum-of-product of indicators $\bar{\boldsymbol{\mathsf{1}}}(\mathbf{X},\mathbf{a},\mathbf{b})$ described in section \ref{sec:intution}. Specifically, this approximation takes the form of a $J$th partial sum of a convergent basis expansion of the indicator functions, from which it can be seen that the choice of $J$ will provide a user-controlled trade-off between speed and accuracy.

\subsection{Approximation of indicator functions}\label{subsec:approx}

Let $a\in[0,1)$ and $b\in(0,1]$, such that $a<b$ and $(a,b)\subseteq(0,1)$. Define the set:
\begin{align*}
P_{\:(a,b)}=(0,1)\setminus\Big(\{a\}^{\mathsf{1}(a>0)}\cup\{b\}^{\mathsf{1}(b<1)}\Big).
\end{align*}

We begin with the following lemma.

\begin{lemma}\label{lemma:1d:exp}
For $x\in P_{\:(a,b)}$, we have the following expansion:
\begin{align*}
\mathsf{1}(a<x<b)=\sum_{j=0}^{\infty}c_{\:j}(x)g_{\:j}(a,b),
\end{align*}
where for $x\in(0,1)$ and $j\in\mathbb{N}^+$ we define
\begin{align*}
&\mathsf{c}_{\:0}(x)=1,\quad \mathsf{c}_{\:2j-1}(x)=\cos\big((2j-1)x\big),\quad \mathsf{c}_{\:2j}(x)=\sin\big((2j-1)x\big);\text{ and } \\
&\mathsf{g}_{\:0}(a,b)=1-\frac{1}{2}\big(\mathsf{1}(a>0)+\mathsf{1}(b<1)\big), \\
&\mathsf{g}_{\:2j-1}(a,b)=\frac{2}{\pi(2j-1)}\Big(\mathsf{1}(b<1)\sin\big((2j-1)b\big)-\mathsf{1}(a>0)\sin\big((2j-1)a\big)\Big), \\ 
&\mathsf{g}_{\:2j}(a,b)=\frac{2}{\pi(2j-1)}\Big(\mathsf{1}(a>0)\cos\big((2j-1)a\big)-\mathsf{1}(b<1)\cos\big((2j-1)b\big)\Big).
\end{align*}
\end{lemma}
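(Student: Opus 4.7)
The plan is to reduce the two-sided indicator to one-sided indicators and then appeal to a single classical Fourier identity. First, for $x\in(0,1)$ one verifies by enumerating the four cases $(\mathsf{1}(a>0),\mathsf{1}(b<1))\in\{0,1\}^2$ that
\begin{align*}
\mathsf{1}(a<x<b) = \mathsf{1}(a>0)\,\mathsf{1}(x>a) + \mathsf{1}(b<1)\,\mathsf{1}(x<b) - \bigl[\mathsf{1}(a>0)+\mathsf{1}(b<1)-1\bigr],
\end{align*}
so it suffices to expand $\mathsf{1}(x>a)$ for $a\in(0,1)$ pointwise on $(0,1)\setminus\{a\}$ and $\mathsf{1}(x<b)$ for $b\in(0,1)$ pointwise on $(0,1)\setminus\{b\}$, in the advertised basis.

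Next, I would invoke the classical Fourier sine series of the sign function on $(-\pi,\pi)$,
\begin{align*}
\mathrm{sgn}(y) = \frac{4}{\pi}\sum_{j=1}^\infty \frac{\sin\bigl((2j-1)y\bigr)}{2j-1}, \qquad y\in(-\pi,\pi)\setminus\{0\},
\end{align*}
whose pointwise convergence for $y\ne 0$ follows from Dirichlet's theorem (the even harmonics vanish by oddness, and the odd-harmonic coefficients compute to $4/(\pi(2j-1))$). Since $a,x\in(0,1)$ with $x\ne a$ forces $y=x-a\in(-1,1)\setminus\{0\}\subset(-\pi,\pi)\setminus\{0\}$, and $\mathsf{1}(x>a) = \tfrac{1}{2}\bigl(1+\mathrm{sgn}(x-a)\bigr)$, substitution yields
\begin{align*}
\mathsf{1}(x>a) = \frac{1}{2} + \frac{2}{\pi}\sum_{j=1}^\infty \frac{\sin\bigl((2j-1)(x-a)\bigr)}{2j-1},
\end{align*}
and symmetrically $\mathsf{1}(x<b) = \tfrac{1}{2} + \tfrac{2}{\pi}\sum_{j}\sin((2j-1)(b-x))/(2j-1)$ by taking $y=b-x$.

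Substituting both expansions into the decomposition above and applying the angle-subtraction identity $\sin\bigl((2j-1)(x-a)\bigr) = \sin((2j-1)x)\cos((2j-1)a) - \cos((2j-1)x)\sin((2j-1)a)$, together with the analogous one for $\sin((2j-1)(b-x))$, I would then collect coefficients: the constant contributions $\tfrac{1}{2}\mathsf{1}(a>0)+\tfrac{1}{2}\mathsf{1}(b<1)-(\mathsf{1}(a>0)+\mathsf{1}(b<1)-1)$ reduce to $g_0(a,b)$, the coefficient of $\cos((2j-1)x)$ reduces to $g_{2j-1}(a,b)$, and the coefficient of $\sin((2j-1)x)$ reduces to $g_{2j}(a,b)$. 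The matching is routine trigonometric bookkeeping once the decomposition and the sign-series are in hand. The main obstacle, and the reason for the precise definition of $P_{(a,b)}$, is pointwise convergence at the jump discontinuities: the sign-function series converges to the midpoint value $0$ at $y=0$ rather than to $\mathrm{sgn}(0)$, so $x=a$ (when $a>0$) and $x=b$ (when $b<1$) must be excluded, as these are exactly the jump points of $\mathsf{1}(a<x<b)$ where the series otherwise fails to reproduce the indicator.
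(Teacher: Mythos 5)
Your proposal is correct and follows essentially the same route as the paper: the same reduction of $\mathsf{1}(a<x<b)$ to indicator-weighted one-sided terms (your decomposition is algebraically identical to the paper's $1-\mathsf{1}(a>0)+\mathsf{1}(a>0)\mathsf{1}(a<x)-\mathsf{1}(b<1)\mathsf{1}(b\leq x)$), the same underlying Fourier series (the paper states it for $\mathsf{1}(z<0)$ rather than $\mathrm{sgn}$, a trivial reparametrization), and the same angle-subtraction bookkeeping to recover $\mathsf{g}_{\:0}$, $\mathsf{g}_{\:2j-1}$, $\mathsf{g}_{\:2j}$. Your remark on why the jump points are excised from $P_{\:(a,b)}$ matches the paper's intent as well.
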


\begin{proof}
See Appendix \ref{ap:approx_proofs}.
\end{proof}

Next, let $\mathsf{1}_{\:J}(x,a,b)$ denote the $(2J)$th partial sum in the series expansion of $\mathsf{1}(a<x<b)$ in Lemma \ref{lemma:1d:exp} for $J\in\mathbb{N}^+$, so that:
\begin{align*}
\mathsf{1}_{\:J}(x,a,b)=\sum_{j=0}^{2J}\mathsf{c}_{\:j}(x)\cdot\mathsf{g}_{\:j}(a,b)\text{ for }J\in\mathbb{N}^+.
\end{align*}
Now, for a $\delta\in(0,1)$, let us define the set:
\begin{align*}
U_{\:\delta,\:(a,b)}=(0,1)\setminus\big((a-\delta,a+\delta)^{\mathsf{1}(a>0)}\cup(b-\delta,b+\delta)^{\mathsf{1}(b<1)}\big).
\end{align*}

We then have the following lemma.

\begin{lemma}\label{lemma:1d:unif}
(A)\ For any $\delta\in(0,1)$, the sequence of functions $\{\mathsf{1}_{\:J}(x,a,b)\}_{\:J=0}^{\infty}$ uniformly converges to the limit $\mathsf{1}(a<x<b)$ for $x\in U_{\:\delta,\:(a,b)}$.

(B)\ The sequence of functions $\{\mathsf{1}_{\:J}(x,a,b)\}_{\:J=0}^{\infty}$ is uniformly bounded for $x\in(0,1)$.
\end{lemma}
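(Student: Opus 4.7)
The plan is to consolidate $\mathsf{1}_{\:J}(x,a,b)$ into a form involving the classical odd-harmonic sine partial sum, and then invoke standard uniform-convergence and uniform-boundedness results for those sums.

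First, I would apply $\sin\alpha\cos\beta-\cos\alpha\sin\beta=\sin(\alpha-\beta)$ to combine, at each frequency $2j-1$, the cosine coefficient $\mathsf{g}_{\:2j-1}(a,b)$ paired with $\mathsf{c}_{\:2j-1}(x)$ and the sine coefficient $\mathsf{g}_{\:2j}(a,b)$ paired with $\mathsf{c}_{\:2j}(x)$. A direct calculation yields the closed form
\begin{align*}
\mathsf{1}_{\:J}(x,a,b)=\mathsf{g}_{\:0}(a,b)+\frac{2}{\pi}\sum_{j=1}^{J}\frac{\mathsf{1}(b<1)\sin\bigl((2j-1)(b-x)\bigr)-\mathsf{1}(a>0)\sin\bigl((2j-1)(a-x)\bigr)}{2j-1}.
\end{align*}
Thus, up to the constant $\mathsf{g}_{\:0}(a,b)$, the partial sum is a linear combination of two evaluations of the classical odd-harmonic sum $S_{\:J}(\theta):=\sum_{j=1}^{J}\sin\bigl((2j-1)\theta\bigr)/(2j-1)$, at $\theta=b-x$ and $\theta=a-x$, whose pointwise limit equals $(\pi/4)\,\mathrm{sgn}(\sin\theta)$ whenever $\sin\theta\neq 0$.

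For part (A), I would apply Abel summation with the monotone coefficients $1/(2j-1)$ together with the identity $\sum_{j=1}^{N}\sin\bigl((2j-1)\theta\bigr)=\sin^{2}(N\theta)/\sin\theta$, whose modulus is bounded by $1/|\sin\theta|$, to obtain the tail estimate $|S_{\:\infty}(\theta)-S_{\:J}(\theta)|\le C/\bigl((2J+1)|\sin\theta|\bigr)$. For $x\in U_{\:\delta,\:(a,b)}$ one has $|b-x|\ge\delta$ whenever $b<1$, and $|a-x|\ge\delta$ whenever $a>0$; since $a,b,x\in[0,1]$ the arguments $b-x,a-x$ lie in $(-1,1)\subset(-\pi/2,\pi/2)$, so $|\sin(b-x)|,|\sin(a-x)|\ge\sin\delta$. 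Hence the tail bound is uniform in $x\in U_{\:\delta,\:(a,b)}$, which yields uniform convergence of $S_{\:J}$ and therefore of $\mathsf{1}_{\:J}(x,a,b)$. A short case analysis on the sign of $x-a$ and $x-b$, together with the boundary cases $a=0$ or $b=1$ in which $\mathsf{g}_{\:0}$ absorbs the missing $1/2$, then confirms that the limit is exactly $\mathsf{1}(a<x<b)$.

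For part (B), I would invoke the classical uniform bound $\sup_{J\in\mathbb{N}^+,\theta\in\mathbb{R}}|S_{\:J}(\theta)|\le C_{\:0}$, which follows from the integral representation $S_{\:J}(\theta)=\int_{0}^{\theta}\sin(2Jt)/(2\sin t)\,dt$ and the standard oscillatory-integral estimate relating this quantity to the bounded sine-integral function $\mathrm{Si}$. Combined with $|\mathsf{g}_{\:0}(a,b)|\le 1$, this gives a uniform bound on $\mathsf{1}_{\:J}(x,a,b)$ for $x\in(0,1)$ and $J\ge 0$. The main obstacle is the quantitative tail control in part (A): securing uniformity across $U_{\:\delta,\:(a,b)}$ depends essentially on the lower bound $|\sin(b-x)|,|\sin(a-x)|\ge\sin\delta$, and care is required to handle the boundary cases $a=0,b=1$ where one of the two sine sums is absent; once the consolidated closed form above is in hand, however, both parts reduce to well-known facts about partial sums of square-wave Fourier series.
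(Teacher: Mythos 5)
Your proposal is correct, and its overall architecture matches the paper's: both reduce $\mathsf{1}_{\:J}(x,a,b)$ to a fixed affine combination of (at most) two shifted partial sums of the square-wave series $\sum_j \sin\bigl((2j-1)\theta\bigr)/(2j-1)$, one centered at $a$ and one at $b$, with the indicator prefactors $\mathsf{1}(a>0)$, $\mathsf{1}(b<1)$ handling the boundary cases, and then deduce (A) from uniform convergence of that series away from its jump and (B) from its uniform boundedness. The difference is in how those two classical facts are obtained: the paper simply imports them from \citet[Lemma E.3]{Chakravorty2021}, whereas you prove them from scratch via the Dirichlet-kernel identity $\sum_{j=1}^{N}\sin\bigl((2j-1)\theta\bigr)=\sin^{2}(N\theta)/\sin\theta$, Abel summation, and the sine-integral representation $S_{\:J}(\theta)=\int_{0}^{\theta}\sin(2Jt)/(2\sin t)\,dt$. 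Your route is therefore self-contained and moreover quantitative, yielding the explicit rate $O\bigl(1/((2J+1)\sin\delta)\bigr)$ on $U_{\:\delta,\:(a,b)}$ rather than bare uniform convergence; the paper's route is shorter but depends on the companion reference. All the supporting details check out: the product-to-sum consolidation of the $\mathsf{c}_{\:2j-1}\mathsf{g}_{\:2j-1}+\mathsf{c}_{\:2j}\mathsf{g}_{\:2j}$ pairs is algebraically correct, the arguments $a-x$ and $b-x$ indeed lie in $(-1,1)\subset(-\pi/2,\pi/2)$ so that $|\theta|\geq\delta$ forces $|\sin\theta|\geq\sin\delta$, the case analysis recovers the limit $\mathsf{1}(a<x<b)$ on $P_{\:(a,b)}$, and the bounded-$\mathrm{Si}$ argument gives the uniform bound on $(0,1)$.
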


\begin{proof}
See Appendix \ref{ap:approx_proofs}.
\end{proof}

Finally, for $\mathbf{a}\in[\mathbf{0},\mathbf{1}),\mathbf{b}\in(\mathbf{0},\mathbf{1}]$, we define sets
\begin{align*}
P_{\:(\mathbf{a},\mathbf{b})}=\times_{l=1}^pP_{\:(a_{\:l},b_{\:l})},\quad U_{\:\delta,\:(\mathbf{a},\mathbf{b})}=\times_{l=1}^pU_{\:\delta,\:(a_{\:l},b_{\:l})}.
\end{align*}
Let $\mathbb{N}_{\:J}=\{0,1,\dots,2J\}$, and let $\boldsymbol{j}=(j_{\:1},\dots,j_{\:p})$ denote a general index element of $\mathbb{N}_{\:J,\:p}=\mathbb{N}_{\:J}\times\dots\times\mathbb{N}_{\:J}$($p$ times). Now for each $\boldsymbol{j}\in\mathbb{N}_{\:J,\:p}$, we define the following functions:
\begin{align*}
\boldsymbol{\mathsf{c}}_{\:\boldsymbol{j}}(\mathbf{x})=\prod_{l=1}^p\mathsf{c}_{\:j_{\:l}}(x_{\:l}),\quad\boldsymbol{\mathsf{g}}_{\:\boldsymbol{j}}(\mathbf{a},\mathbf{b})=\prod_{l=1}^p\mathsf{g}_{j_{\:l}}(a_{\:l},b_{\:l}).
\end{align*}
We approximate $\mathsf{1}(\mathbf{a}<\mathbf{x}<\mathbf{b})$ with $\mathsf{1}_{\:J}(\mathbf{x},\mathbf{a},\mathbf{b})$, where
\begin{align*}
\mathsf{1}_{\:J}(\mathbf{x},\mathbf{a},\mathbf{b})&=\prod_{l=1}^p\mathsf{1}_{\:J}(x_{\:l},a_{\:l},b_{\:l})=\prod_{l=1}^p\Big(\sum_{j=0}^{2J}\mathsf{c}_{\:j}(x_{\:l})\cdot\mathsf{g}_{\:j}(a_{\:l},b_{\:l})\Big) \\
&=\sum_{\boldsymbol{j}\in\mathbb{N}_{\:J,\:p}}\boldsymbol{\mathsf{c}}_{\:\boldsymbol{j}}(\mathbf{x})\cdot\boldsymbol{\mathsf{g}}_{\:\boldsymbol{j}}(\mathbf{a},\mathbf{b}).
\end{align*}

Lastly, we have the following lemma.

\begin{lemma}\label{lemma:pd:unif}
(A)\ For any $\delta\in(0,1)$, the sequence of functions $\{\mathsf{1}_{\:J}(\mathbf{x},\mathbf{a},\mathbf{b})\}_{\:J=0}^{\infty}$ uniformly converges to the limit $\mathsf{1}(\mathbf{a}<\mathbf{x}<\mathbf{b})$ for $\mathbf{x}\in U_{\:\delta,\:(\mathbf{a},\mathbf{b})}$.

(B)\ The sequence of functions $\{\mathsf{1}_{\:J}(\mathbf{x},\mathbf{a},\mathbf{b})\}_{\:J=0}^{\infty}$ is uniformly bounded for $\mathbf{x}\in(\mathbf{0},\mathbf{1})$.
\end{lemma}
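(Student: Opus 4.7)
The plan is to reduce everything to the one-dimensional results of Lemma \ref{lemma:1d:unif} using the product factorization
\begin{align*}
\mathsf{1}_{\:J}(\mathbf{x},\mathbf{a},\mathbf{b})=\prod_{l=1}^{p}\mathsf{1}_{\:J}(x_{\:l},a_{\:l},b_{\:l}),\qquad \mathsf{1}(\mathbf{a}<\mathbf{x}<\mathbf{b})=\prod_{l=1}^{p}\mathsf{1}(a_{\:l}<x_{\:l}<b_{\:l}),
\end{align*}
which is built into the definitions. Part (B) is then essentially immediate: by Lemma \ref{lemma:1d:unif}(B) there exists a constant $M$ (independent of $J$ and of $x\in(0,1)$) such that $|\mathsf{1}_{\:J}(x,a,b)|\leq M$ for all $J$, and therefore $|\mathsf{1}_{\:J}(\mathbf{x},\mathbf{a},\mathbf{b})|\leq M^{\:p}$ uniformly in $\mathbf{x}\in(\mathbf{0},\mathbf{1})$ and $J$.

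For Part (A), I would fix $\delta\in(0,1)$ and note that by definition $\mathbf{x}\in U_{\:\delta,\:(\mathbf{a},\mathbf{b})}$ forces each coordinate $x_{\:l}\in U_{\:\delta,\:(a_{\:l},b_{\:l})}$, so that Lemma \ref{lemma:1d:unif}(A) applies in every factor. The standard tool for turning coordinate-wise uniform convergence of a product into uniform convergence of the product is a telescoping identity: writing $u_{\:l}=\mathsf{1}_{\:J}(x_{\:l},a_{\:l},b_{\:l})$ and $v_{\:l}=\mathsf{1}(a_{\:l}<x_{\:l}<b_{\:l})$, one has
\begin{align*}
\prod_{l=1}^{p}u_{\:l}-\prod_{l=1}^{p}v_{\:l}=\sum_{k=1}^{p}\Bigl(\prod_{l<k}u_{\:l}\Bigr)(u_{\:k}-v_{\:k})\Bigl(\prod_{l>k}v_{\:l}\Bigr).
\end{align*}
Since $|v_{\:l}|\leq 1$ trivially and $|u_{\:l}|\leq M$ by Lemma \ref{lemma:1d:unif}(B), this gives the pointwise bound
\begin{align*}
\bigl|\mathsf{1}_{\:J}(\mathbf{x},\mathbf{a},\mathbf{b})-\mathsf{1}(\mathbf{a}<\mathbf{x}<\mathbf{b})\bigr|\leq M^{\:p-1}\sum_{k=1}^{p}\bigl|\mathsf{1}_{\:J}(x_{\:k},a_{\:k},b_{\:k})-\mathsf{1}(a_{\:k}<x_{\:k}<b_{\:k})\bigr|
\end{align*}
on all of $U_{\:\delta,\:(\mathbf{a},\mathbf{b})}$. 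Given $\varepsilon>0$, Lemma \ref{lemma:1d:unif}(A) supplies a single $J_{\:0}=J_{\:0}(\delta,\varepsilon)$ such that each term on the right is at most $\varepsilon/(pM^{\:p-1})$ whenever $J\geq J_{\:0}$, uniformly in $x_{\:k}\in U_{\:\delta,\:(a_{\:k},b_{\:k})}$. Summing yields $|\mathsf{1}_{\:J}(\mathbf{x},\mathbf{a},\mathbf{b})-\mathsf{1}(\mathbf{a}<\mathbf{x}<\mathbf{b})|\leq\varepsilon$ uniformly on $U_{\:\delta,\:(\mathbf{a},\mathbf{b})}$, which is exactly the claim.

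There is no genuine obstacle here; the only point that deserves care is that Part (B) of the one-dimensional lemma is really used in the proof of Part (A), since without a uniform-in-$J$ bound on the factors $u_{\:l}$ the telescoping inequality could blow up with $J$. Once that observation is in place, both statements fall out of Lemma \ref{lemma:1d:unif} with no further analysis, so I would present the two parts together and emphasize the telescoping step as the only substantive content of the argument.
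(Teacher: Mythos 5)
Your proof is correct and follows essentially the same route as the paper: both reduce to the coordinate factors via $\mathsf{1}_{\:J}(\mathbf{x},\mathbf{a},\mathbf{b})=\prod_{l=1}^{p}\mathsf{1}_{\:J}(x_{\:l},a_{\:l},b_{\:l})$ and invoke parts (A) and (B) of Lemma \ref{lemma:1d:unif} to conclude uniform convergence of the product. The only difference is that the paper cites \citet[Theorem 7.9]{Rudin1976} for the fact that a product of uniformly convergent, uniformly bounded sequences converges uniformly, whereas you prove that fact directly by the telescoping identity (where one should take the bound $M\geq 1$ so that $M^{\:k-1}\leq M^{\:p-1}$); this is a welcome, self-contained substitute for the citation rather than a different argument.
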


\begin{proof}
See Appendix \ref{ap:approx_proofs}.
\end{proof}

\subsection{Stochastic bounds on accuracy}\label{subsec:stoch}

Consider a probability triple $(\Omega,\mathcal{F},P)$ giving rise to independent and identically distributed realizations $\{\mathbf{x}_{\:i}, i \in I\}$ of a random variable $\tilde{\mathbf{x}}\colon\Omega\to(\mathbf{0},\mathbf{1})$, so that $\mathbf{x}_{\:i}(w)$ is the value corresponding to a sample point $w\in\Omega$ for $i\in I$. Let us fix a $w\in\Omega$ and $n\in\mathbb{N}$, and let $\mathbf{X}(w,n)$ denote the input data  $\big(\mathbf{x}_{\:1}(w),\mathbf{x}_{\:2}(w),\dots,\mathbf{x}_n(w)\big)$, so that $|\mathbf{X}|=|\mathbf{X}(w,n)|=n$. For $\boldsymbol{j}\in\mathbb{N}_{\:J\:,p}$, let us define the statistic $\boldsymbol{\mathsf{C}}_{\:\boldsymbol{j}}(\mathbf{X})=\sum_{i\in I}\boldsymbol{\mathsf{c}}_{\:\boldsymbol{j}}(\mathbf{x}_{\:i})$, and its corresponding standardized version $\bar{\boldsymbol{\mathsf{C}}}_{\:\boldsymbol{j}}(\mathbf{X})=\big(\nicefrac{1}{|\mathbf{X}|}\big)\cdot\boldsymbol{\mathsf{C}}_{\:\boldsymbol{j}}(\mathbf{X})$. Note that for any arbitrary partition $\big\{\mathbf{X}_{\:1},\dots,\mathbf{X}_{\:R}\big\}$ of the data $\mathbf{X}$ into $R$ subsets (here $R\in\mathbb{N}^+$ and $1\leq R\leq n$), we always have  $\boldsymbol{\mathsf{C}}_{\:\boldsymbol{j}}(\mathbf{X})=\sum_{r=1}^R\boldsymbol{\mathsf{C}}_{\:\boldsymbol{j}}(\mathbf{X}_{\:r})$ for $\boldsymbol{j}\in\mathbb{N}_{\:J\:,p}$.
So, for any distributed data set $\mathbf{X}$, the collection of statistics  $\boldsymbol{\mathsf{C}}_{\:\boldsymbol{j}}(\mathbf{X})$ for $\boldsymbol{j}\in\mathbb{N}_{\:J\:,p}$, can exactly be computed in parallel. Observe: $\boldsymbol{\mathsf{C}}_{\:\boldsymbol{0}}(\mathbf{X})=\sum_{i\in I}1=|\mathbf{X}|$, where $\boldsymbol{0}$ is a $p$-dimensional vector of zeros. For $\boldsymbol{j}\in\mathbb{N}_{\:J,\:p}$, we have: $\bar{\boldsymbol{\mathsf{C}}}_{\:\boldsymbol{j}}(\mathbf{X})=\big(\nicefrac{1}{\mathbf{X}}\big)\cdot\boldsymbol{\mathsf{C}}_{\:\boldsymbol{j}}(\mathbf{X})=\boldsymbol{\mathsf{C}}_{\:\boldsymbol{j}}(\mathbf{X})/\boldsymbol{\mathsf{C}}_{\:\boldsymbol{0}}(\mathbf{X})$. Thus, the collection of standardized statistics $\bar{\boldsymbol{\mathsf{C}}}_{\:\boldsymbol{j}}(\mathbf{X})$ for $\boldsymbol{j}\in\mathbb{N}_{\:J\:,p}$ also can be exactly computed in parallel.

Throughout this section, we will assume that $(\mathbf{a},\mathbf{b})$ is a fixed neighborhood inside $(\mathbf{0},\mathbf{1})$. We approximate $\boldsymbol{\mathsf{1}}(\mathbf{X},\mathbf{a},\mathbf{b})$ by $\boldsymbol{\mathsf{1}}_{\:J}(\mathbf{X},\mathbf{a},\mathbf{b})$, defined as
\begin{align*}
\boldsymbol{\mathsf{1}}_{\:J}(\mathbf{X},\mathbf{a},\mathbf{b})=\sum_{i\in I}\mathsf{1}_J(\mathbf{x}_{\:i},\mathbf{a},\mathbf{b})=\sum_{i\in I}\sum_{\boldsymbol{j}\in\mathbb{N}_{\:J\:,p}}\boldsymbol{\mathsf{c}}_{\:\boldsymbol{j}}(\mathbf{x})\cdot\boldsymbol{\mathsf{g}}_{\:\boldsymbol{j}}(\mathbf{a},\mathbf{b}).
\end{align*}
This is a finite sum and we can exchange summation to obtain
\begin{align*}
\boldsymbol{\mathsf{1}}_{\:J}(\mathbf{X},\mathbf{a},\mathbf{b})=\sum_{\boldsymbol{j}\in\mathbb{N}_{\:J\:,p}}\Big(\sum_{i\in I}\boldsymbol{\mathsf{c}}_{\:\boldsymbol{j}}(\mathbf{x}_{\:i})\Big)\cdot\boldsymbol{\mathsf{g}}_{\:\boldsymbol{j}}(\mathbf{a},\mathbf{b})=\sum_{\boldsymbol{j}\in\mathbb{N}_{\:J\:,p}}\boldsymbol{\mathsf{C}}_{\:\boldsymbol{j}}(\mathbf{X})\cdot\boldsymbol{\mathsf{g}}_{\:\boldsymbol{j}}(\mathbf{a},\mathbf{b}).
\end{align*}
If we standardize by dividing both sides by $|\mathbf{X}|$, we obtain:
\begin{align}\label{kd:eq:a}
\bar{\boldsymbol{\mathsf{1}}}_{\:J}(\mathbf{X},\mathbf{a},\mathbf{b})=\frac{1}{|\mathbf{X}|}\boldsymbol{\mathsf{1}}_{\:J}(\mathbf{X},\mathbf{a},\mathbf{b})=\sum_{\boldsymbol{j}\in\mathbb{N}_{\:J\:,p}}\bar{\boldsymbol{\mathsf{C}}}_{\:\boldsymbol{j}}(\mathbf{X})\cdot\boldsymbol{\mathsf{g}}_{\:\boldsymbol{j}}(\mathbf{a},\mathbf{b}).
\end{align}
Recall from section \ref{sec:intution} that $m_{\:t}(\mathbf{X},\mathbf{a},\mathbf{b})$ solves the following  for $m\in(a_{\:t},b_{\:t})$:
\begin{align*}
\bar{\boldsymbol{\mathsf{1}}}\big(\mathbf{X},\mathbf{a},\mathbf{b}_{\:\text{-}t}(m)\big)=\bar{\boldsymbol{\mathsf{1}}}\big(\mathbf{X},\mathbf{a}_{\:\text{-}t}(m),\mathbf{b})\big).
\end{align*}

We will see that for large $J$ and large $|\mathbf{X}|$, it can be shown that $\bar{\boldsymbol{\mathsf{1}}}_{\:J}(\mathbf{X},\mathbf{a},\mathbf{b})$ is a good approximation to $\bar{\boldsymbol{\mathsf{1}}}(\mathbf{X},\mathbf{a},\mathbf{b})$. Consequently, we define $\hat{m}_{\:t,\:J}(\mathbf{X},\mathbf{a},\mathbf{b})$ as a solution to the following equation for $m\in(a_{\:t},b_{\:t})$:
\begin{align}\label{kd:eq:b}
\bar{\boldsymbol{\mathsf{1}}}_{\:J}\big(\mathbf{X},\mathbf{a},\mathbf{b}^{-t}(m)\big)=\bar{\boldsymbol{\mathsf{1}}}_{\:J}\big(\mathbf{X},\mathbf{a}^{-t}(m),\mathbf{b})\big).
\end{align}

To obtain a parallel $k$-d tree construction, we approximate $m_{\:t}(\mathbf{X},\mathbf{a},\mathbf{b})$ with $\hat{m}_{\:t,\:J}(\mathbf{X},\mathbf{a},\mathbf{b})$ during the construction of the $k$-d tree, selecting the integer $J$ as a trade-off between approximation accuracy and speed. Since we have the relation $\bar{\boldsymbol{\mathsf{1}}}_{\:J}(\mathbf{X},\mathbf{a},\mathbf{b})=\sum_{\boldsymbol{j}\in\mathbb{N}_{\:J\:,p}}\mathbf{g}_{\:\boldsymbol{j}}(\mathbf{a},\mathbf{b})\cdot\bar{\boldsymbol{\mathsf{C}}}_{\:\boldsymbol{j}}(\mathbf{X})$, we realize that $\hat{m}_{\:t,\:J}(\mathbf{X},\mathbf{a},\mathbf{b})$ is a solution to an equation that is characterized by a set of statistics $\big\{\bar{\boldsymbol{\mathsf{C}}}_{\:\boldsymbol{j}}(\mathbf{X})\colon\boldsymbol{j}\in\mathbb{N}_{\:J\:,p}\big\}$ which can be computed entirely in parallel. Hence the computation of $\hat{m}_{\:t,\:J}(\mathbf{X},\mathbf{a},\mathbf{b})$ for $(\mathbf{a},\mathbf{b})\subseteq(\mathbf{0},\mathbf{1})$, and thus the construction of an approximate $k$-d tree, can be done straightforwardly within a parallel, distributed computing environment.

Define the error in approximation of $\mathsf{1}(\mathbf{a}<\mathbf{x}<\mathbf{b})$ at $\mathbf{x}$ as:
\begin{align*}
\mathsf{e}_{\:J}(\mathbf{x},\mathbf{a},\mathbf{b})=\mathsf{1}(\mathbf{a}<\mathbf{x}<\mathbf{b})-\mathsf{1}_{\:J}(\mathbf{x},\mathbf{a},\mathbf{b}).
\end{align*}
Also define the total error $\mathsf{E}_{\:J}(\mathbf{X},\mathbf{a},\mathbf{b})$ and the average error $\bar{\mathsf{E}}_{\:J}(\mathbf{X},\mathbf{a},\mathbf{b})$ as:
\begin{align*}
&\mathsf{E}_{\:J}(\mathbf{X},\mathbf{a},\mathbf{b})=\boldsymbol{\mathsf{1}}(\mathbf{X},\mathbf{a},\mathbf{b})-\boldsymbol{\mathsf{1}}_{\:J}(\mathbf{X},\mathbf{a},\mathbf{b}),\\ 
&\bar{\mathsf{E}}_{\:J}(\mathbf{X},\mathbf{a},\mathbf{b})=\bar{\boldsymbol{\mathsf{1}}}(\mathbf{X},\mathbf{a},\mathbf{b})-\bar{\boldsymbol{\mathsf{1}}}_{\:J}(\mathbf{X},\mathbf{a},\mathbf{b}).
\end{align*}
Finally, let us also define the set
\begin{align*}
\mathcal{N}_{\:p}=\{\mathcal{A}_{\:t}(m,\mathbf{a},\mathbf{b}):1\leq t\leq p,m\in(0,1),(\mathbf{a},\mathbf{b})\subseteq(\mathbf{0},\mathbf{1})\}.
\end{align*}

Now, we state two basic assumptions for the random variable $\tilde{\mathbf{x}}$.

\begin{assumption}\label{assumption:positive_density}
$P:\mathbb{B}(\mathbf{0},\mathbf{1})\mapsto[0,1]$ is a probability measure which is absolutely continuous with respect to the Lebesgue measure $\lambda_{\:p}$ on $\mathbb{B}(\mathbf{0},\mathbf{1})$.
\end{assumption}

\begin{assumption}\label{assumption:sup_null_set}
\begin{align*}
\sup_{\mathcal{A}_{\:t}(m,\mathbf{a},\mathbf{b})\in\mathcal{N}_{\:p}}\bigg(\lim_{|\mathbf{X}|\to\infty}\Big|\frac{1}{|\mathbf{X}|}\sum_{i\in I}\mathsf{1}\big(\mathbf{x}_{\:i}\in \mathcal{A}_{\:t}(m,\mathbf{a},\mathbf{b})\big)\Big|\bigg)\overset{a.s.}{\rightarrow}0.
\end{align*}
\end{assumption}

These assumptions enable us to obtain the following stochastic bounds on approximation accuracy.

\begin{theorem}\label{thm:n_l=n_r}
Suppose Assumption \ref{assumption:positive_density} holds. Then, given any $(\mathbf{a},\mathbf{b})\subseteq(\mathbf{0},\mathbf{1})$, we have:
\begin{align*}
&(A)\lim_{J\to\infty}E_{\:P}\big(\mathsf{e}_{\:J}(\tilde{\mathbf{x}},\mathbf{a},\mathbf{b})\big)=0. \\
&(B) \lim_{J\to\infty}\lim_{|\mathbf{X}|\to\infty}\bar{\mathsf{E}}(\mathbf{X},\mathbf{a},\mathbf{b})\overset{a.s.}{=}0. \\
&(C) \lim_{J\to\infty}\lim_{|\mathbf{X}|\to\infty}\bar{\boldsymbol{\mathsf{1}}}\Big(\mathbf{X},\mathbf{a},\mathbf{b}_{\:\text{-}t}\big(\hat{m}_{\:t,\:J}(\mathbf{X},\mathbf{a},\mathbf{b})\big)\Big) \\
&\hspace{4cm}=\lim_{J\to\infty}\lim_{|\mathbf{X}|\to\infty}\bar{\boldsymbol{\mathsf{1}}}\Big(\mathbf{X},\mathbf{a}_{\:\text{-}t}\big(\hat{m}_{\:t,\:J}(\mathbf{X},\mathbf{a},\mathbf{b})\big),\mathbf{b})\Big).
\end{align*}
\end{theorem}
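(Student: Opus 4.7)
The plan is to prove (A) by dominated convergence, derive (B) from (A) via the strong law of large numbers, and then deduce (C) from the defining equation for $\hat{m}_{\:t,\:J}$ together with a uniform-in-$m$ refinement of (B).

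For (A), Lemma \ref{lemma:pd:unif}(A) gives $\mathsf{1}_{\:J}(\mathbf{x},\mathbf{a},\mathbf{b})\to\mathsf{1}(\mathbf{a}<\mathbf{x}<\mathbf{b})$ uniformly on each $U_{\:\delta,\:(\mathbf{a},\mathbf{b})}$, hence pointwise on the complement of the $p$ faces of the box $(\mathbf{a},\mathbf{b})$. Under Assumption \ref{assumption:positive_density}, these faces carry no $P$-mass (they have zero Lebesgue measure), so convergence holds $P$-almost surely. Lemma \ref{lemma:pd:unif}(B) supplies a uniform-in-$J$ integrable majorant, and the dominated convergence theorem delivers $\lim_{J\to\infty}E_{\:P}[\mathsf{e}_{\:J}(\tilde{\mathbf{x}},\mathbf{a},\mathbf{b})]=0$.

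For (B), the random variables $\mathsf{1}(\mathbf{a}<\mathbf{x}_{\:i}<\mathbf{b})$ and $\mathsf{1}_{\:J}(\mathbf{x}_{\:i},\mathbf{a},\mathbf{b})$ are i.i.d.\ and bounded (the latter uniformly in $J$ by Lemma \ref{lemma:pd:unif}(B)), so the strong law of large numbers yields $\bar{\boldsymbol{\mathsf{1}}}(\mathbf{X},\mathbf{a},\mathbf{b})\to E_{\:P}[\mathsf{1}(\mathbf{a}<\tilde{\mathbf{x}}<\mathbf{b})]$ and $\bar{\boldsymbol{\mathsf{1}}}_{\:J}(\mathbf{X},\mathbf{a},\mathbf{b})\to E_{\:P}[\mathsf{1}_{\:J}(\tilde{\mathbf{x}},\mathbf{a},\mathbf{b})]$ almost surely. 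Subtracting gives $\lim_{|\mathbf{X}|\to\infty}\bar{\mathsf{E}}_{\:J}(\mathbf{X},\mathbf{a},\mathbf{b})\overset{a.s.}{=}E_{\:P}[\mathsf{e}_{\:J}(\tilde{\mathbf{x}},\mathbf{a},\mathbf{b})]$, and letting $J\to\infty$ through (A) concludes.

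The main obstacle is (C). The key algebraic identity is that, by (\ref{kd:eq:b}), the quantity $\bar{\boldsymbol{\mathsf{1}}}_{\:J}(\mathbf{X},\mathbf{a},\mathbf{b}_{\:\text{-}t}(\hat{m}_{\:t,\:J}))-\bar{\boldsymbol{\mathsf{1}}}_{\:J}(\mathbf{X},\mathbf{a}_{\:\text{-}t}(\hat{m}_{\:t,\:J}),\mathbf{b})$ is identically zero, so the difference of the two sides in (C) equals $\bar{\mathsf{E}}_{\:J}(\mathbf{X},\mathbf{a},\mathbf{b}_{\:\text{-}t}(\hat{m}_{\:t,\:J}))-\bar{\mathsf{E}}_{\:J}(\mathbf{X},\mathbf{a}_{\:\text{-}t}(\hat{m}_{\:t,\:J}),\mathbf{b})$. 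Because $\hat{m}_{\:t,\:J}$ is a function of the data, (B) cannot be applied pointwise; what is needed is $\sup_{m\in(a_{\:t},b_{\:t})}|\bar{\mathsf{E}}_{\:J}(\mathbf{X},\mathbf{a},\mathbf{b}_{\:\text{-}t}(m))|\to 0$ a.s., and similarly on the right. I would obtain this by (i) invoking a multivariate Glivenko--Cantelli theorem for the VC class of axis-aligned open rectangles to control $\sup_m|\bar{\boldsymbol{\mathsf{1}}}(\mathbf{X},\cdot)-P(\cdot)|$, (ii) an analogous uniform SLLN for the equicontinuous family of bounded trigonometric polynomials $\{\mathsf{1}_{\:J}(\cdot,\mathbf{a},\mathbf{b}_{\:\text{-}t}(m))\}_{m\in[a_{\:t},b_{\:t}]}$, and (iii) a uniform-in-$m$ upgrade of (A). Step (iii) is the technically delicate part: the only $m$-discontinuities of the limiting boundary mass $P((\mathbf{a},\mathbf{b}_{\:\text{-}t}(m)))$ lie on hyperplanes $\mathcal{A}_{\:t}(m,\mathbf{a},\mathbf{b})\in\mathcal{N}_{\:p}$, and Assumption \ref{assumption:sup_null_set} is precisely the uniform control on the empirical mass of such hyperplanes that lets one promote the pointwise DCT of (A) into a supremum bound and pass through $m=\hat{m}_{\:t,\:J}(\mathbf{X})$.
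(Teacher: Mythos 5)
Your proofs of (A) and (B) match the paper's in substance: the paper carries out your dominated-convergence step by hand, using absolute continuity to make the complement of $U_{\:\delta,\:(\mathbf{a},\mathbf{b})}$ have $P$-measure below $\nicefrac{\epsilon}{2M}$ and uniform convergence on $U_{\:\delta,\:(\mathbf{a},\mathbf{b})}$ to control the rest, and then derives (B) from Kolmogorov's strong law exactly as you do. Where you genuinely diverge is (C). The paper simply ``applies part (B)'' to the data-dependent intervals $\big(\mathbf{a},\mathbf{b}_{\:\text{-}t}(\hat{m}_{\:t,\:J})\big)$ and $\big(\mathbf{a}_{\:\text{-}t}(\hat{m}_{\:t,\:J}),\mathbf{b}\big)$ and then equates via the defining equation \ref{kd:eq:b}; you correctly observe that (B) is an almost-sure statement for a \emph{fixed} interval, with an exceptional null set that may depend on the interval, so plugging in a random endpoint is not licensed without a uniform-in-$m$ strengthening. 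Your route --- Glivenko--Cantelli over axis-aligned rectangles, a uniform SLLN for the family $\{\mathsf{1}_{\:J}(\cdot,\mathbf{a},\mathbf{b}_{\:\text{-}t}(m))\}_m$, and a uniform version of (A) --- is sound and in fact repairs a step the paper glosses over. Two remarks. First, Theorem \ref{thm:n_l=n_r} assumes only Assumption \ref{assumption:positive_density}, so you may not invoke Assumption \ref{assumption:sup_null_set} in your step (iii); fortunately you do not need it: the exceptional slabs $\{|x_{\:t}-m|\le\delta\}$ have Lebesgue measure $2\delta$ independent of $m$, and absolute continuity gives a modulus that is uniform over Borel sets, so $\sup_m P\big(\{|x_{\:t}-m|\le\delta\}\big)$ is small uniformly in $m$, which is all the uniform dominated-convergence step requires (the Fourier partial sums also converge uniformly away from the jump at a rate independent of the jump location). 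Second, your step (ii) is easier than you suggest: for fixed $J$, $\bar{\boldsymbol{\mathsf{1}}}_{\:J}\big(\mathbf{X},\mathbf{a},\mathbf{b}_{\:\text{-}t}(m)\big)=\sum_{\boldsymbol{j}\in\mathbb{N}_{\:J,\:p}}\bar{\boldsymbol{\mathsf{C}}}_{\:\boldsymbol{j}}(\mathbf{X})\cdot\boldsymbol{\mathsf{g}}_{\:\boldsymbol{j}}\big(\mathbf{a},\mathbf{b}_{\:\text{-}t}(m)\big)$ is a finite linear combination with coefficients $\boldsymbol{\mathsf{g}}_{\:\boldsymbol{j}}$ bounded uniformly in $m$, so uniformity in $m$ follows from the ordinary SLLN applied to the finitely many statistics $\bar{\boldsymbol{\mathsf{C}}}_{\:\boldsymbol{j}}(\mathbf{X})$, with no empirical-process machinery needed for that piece.
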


\begin{proof}
See Appendix \ref{ap:stoch_proofs}.
\end{proof}

\begin{theorem}\label{thm:n_l=n/2}
Suppose Assumption \ref{assumption:positive_density} and Assumption \ref{assumption:sup_null_set} hold. Then, given any $(\mathbf{a},\mathbf{b})\subseteq(\mathbf{0},\mathbf{1})$, 
we have:
\begin{align*}
&\lim_{J\to\infty}\lim_{|\mathbf{X}|\to\infty}\bar{\boldsymbol{\mathsf{1}}}\Big(\mathbf{X},\mathbf{a},\mathbf{b}_{\:\text{-}t}\big(\hat{m}_{\:t,\:J}(\mathbf{X},\mathbf{a},\mathbf{b})\big)\Big)\overset{a.s.}{=}\frac{1}{2}\lim_{J\to\infty}\lim_{|\mathbf{X}|\to\infty}\bar{\boldsymbol{\mathsf{1}}}(\mathbf{X},\mathbf{a},\mathbf{b}), \\
&\lim_{J\to\infty}\lim_{|\mathbf{X}|\to\infty}\bar{\boldsymbol{\mathsf{1}}}\Big(\mathbf{X},\mathbf{a}_{\:\text{-}t}\big(\hat{m}_{\:t,\:J}(\mathbf{X},\mathbf{a},\mathbf{b}),\mathbf{b}\big)\Big)\overset{a.s.}{=}\frac{1}{2}\lim_{J\to\infty}\lim_{|\mathbf{X}|\to\infty}\bar{\boldsymbol{\mathsf{1}}}(\mathbf{X},\mathbf{a},\mathbf{b}).
\end{align*}
\end{theorem}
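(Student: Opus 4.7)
The strategy is to reduce the theorem to Theorem \ref{thm:n_l=n_r}(C) via an exact geometric decomposition of the normalized count, combined with the uniform vanishing of the boundary slice supplied by Assumption \ref{assumption:sup_null_set}.

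The starting point is the deterministic identity already noted in Section \ref{sec:intution}: for every $m\in(a_{\:t},b_{\:t})$, the open box $(\mathbf{a},\mathbf{b})$ is the disjoint union of $\big(\mathbf{a},\mathbf{b}_{\:\text{-}t}(m)\big)$, $\big(\mathbf{a}_{\:\text{-}t}(m),\mathbf{b}\big)$ and the $(p-1)$-dimensional slice $\mathcal{A}_{\:t}(m,\mathbf{a},\mathbf{b})$. Summing $\mathsf{1}(\mathbf{a}<\mathbf{x}_{\:i}<\mathbf{b})$ over $i\in I$ and normalizing by $|\mathbf{X}|$ gives
\begin{align*}
\bar{\boldsymbol{\mathsf{1}}}(\mathbf{X},\mathbf{a},\mathbf{b})
=\bar{\boldsymbol{\mathsf{1}}}\big(\mathbf{X},\mathbf{a},\mathbf{b}_{\:\text{-}t}(m)\big)
+\bar{\boldsymbol{\mathsf{1}}}\big(\mathbf{X},\mathbf{a}_{\:\text{-}t}(m),\mathbf{b}\big)
+\frac{1}{|\mathbf{X}|}\sum_{i\in I}\mathsf{1}\big(\mathbf{x}_{\:i}\in\mathcal{A}_{\:t}(m,\mathbf{a},\mathbf{b})\big).
\end{align*}
I would then specialize this identity to the random cut $m=\hat{m}_{\:t,\:J}(\mathbf{X},\mathbf{a},\mathbf{b})\in(a_{\:t},b_{\:t})$; this is the only point where the dependence on $\mathbf{X}$ enters.

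The next step is to kill the slice term. Regardless of what value $\hat{m}_{\:t,\:J}$ takes on a given sample, the slice $\mathcal{A}_{\:t}(\hat{m}_{\:t,\:J},\mathbf{a},\mathbf{b})$ belongs to $\mathcal{N}_{\:p}$, so the slice contribution is dominated by $\sup_{\mathcal{A}\in\mathcal{N}_{\:p}}\big|(1/|\mathbf{X}|)\sum_{i\in I}\mathsf{1}(\mathbf{x}_{\:i}\in\mathcal{A})\big|$, which by Assumption \ref{assumption:sup_null_set} tends to $0$ almost surely as $|\mathbf{X}|\to\infty$. Passing to $|\mathbf{X}|\to\infty$ and then $J\to\infty$ in the displayed identity, and denoting by $L_{\:L}$, $L_{\:R}$ the iterated limits of the left- and right-cell terms, one obtains $L_{\:L}+L_{\:R}=\lim_{J\to\infty}\lim_{|\mathbf{X}|\to\infty}\bar{\boldsymbol{\mathsf{1}}}(\mathbf{X},\mathbf{a},\mathbf{b})$ almost surely. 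Theorem \ref{thm:n_l=n_r}(C) gives $L_{\:L}=L_{\:R}$ under Assumption \ref{assumption:positive_density}, so each of $L_{\:L}$ and $L_{\:R}$ equals half of the iterated limit of $\bar{\boldsymbol{\mathsf{1}}}(\mathbf{X},\mathbf{a},\mathbf{b})$, which is the conclusion of the theorem.

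The main obstacle is precisely the data-dependence of $\hat{m}_{\:t,\:J}$: for any fixed deterministic $m$, the vanishing of $(1/|\mathbf{X}|)\sum_i\mathsf{1}(\mathbf{x}_{\:i}\in\mathcal{A}_{\:t}(m,\mathbf{a},\mathbf{b}))$ follows easily from absolute continuity (Assumption \ref{assumption:positive_density}) and the strong law of large numbers, but once $m$ is replaced by a random cut computed from the same sample one must upgrade to the uniform, supremum-form control provided by Assumption \ref{assumption:sup_null_set}. Everything else is bookkeeping via the disjoint-union identity and Theorem \ref{thm:n_l=n_r}(C).
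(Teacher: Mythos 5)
Your proposal is correct and follows essentially the same route as the paper: the same disjoint-union identity splitting $(\mathbf{a},\mathbf{b})$ into the two child cells and the slice $\mathcal{A}_{\:t}\big(\hat{m}_{\:t,\:J}(\mathbf{X},\mathbf{a},\mathbf{b}),\mathbf{a},\mathbf{b}\big)$, the same appeal to the supremum form of Assumption \ref{assumption:sup_null_set} to handle the data-dependence of the cut and kill the slice term, and the same reduction to the equality of the left- and right-cell limits. The only difference is organizational: you invoke Theorem \ref{thm:n_l=n_r}(C) as a black box to get $L_{\:L}=L_{\:R}$, whereas the paper re-derives that equality inline from Theorem \ref{thm:n_l=n_r}(B) together with the defining relation $\bar{\boldsymbol{\mathsf{1}}}_{\:J}(\text{left cell})=\bar{\boldsymbol{\mathsf{1}}}_{\:J}(\text{right cell})$ of $\hat{m}_{\:t,\:J}$, assembling everything into a single $\epsilon$-estimate --- a purely cosmetic distinction, since that is exactly how part (C) is proved.
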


\begin{proof}
See Appendix \ref{ap:stoch_proofs}.
\end{proof}

\section{Constructing a balanced $k$-d tree from distributed data}\label{sec:algorithm}

Let us briefly recall the canonical construction of a $k$-d tree and its equivalent formulation in terms of conditional medians $m_{\:t}(\mathbf{X},\mathbf{a},\mathbf{b})$ as discussed in section \ref{sec:intution}. To build a $k$-d tree of depth $D$ from data $\mathbf{X}$, we compute $2^D-1$ conditional medians $m_{\:1,\:1}(\mathbf{X})$; $m_{\:2,\:1}(\mathbf{X})$, $m_{\:2,\:2}(\mathbf{X})$; $\dots$; $m_{\:D,\:1}(\mathbf{X})$, $\dots$, $m_{\:D,\:2^{D-1}}(\mathbf{X})$, and these $2^D-1$ statistics provide us necessary and sufficient information to build the entire $k$-d tree. This is a naturally recursive procedure, in that $m_{\:2,\:1}(\mathbf{X})$ and $m_{\:2,\:2}(\mathbf{X})$ are dependent on $m_{\:1,\:1}(\mathbf{X})$, and so forth.

By contrast, given a parameter $J$, our approximate $k$-d tree construction will instead proceed by non-recursively computing $2^D-1$ approximate conditional medians $\hat{m}_{\:1,\:1,\:J}(\mathbf{X})$; $\hat{m}_{\:2,\:1,\:J}(\mathbf{X})$, $\hat{m}_{\:2,\:2,\:J}(\mathbf{X})$; $\dots$; $\hat{m}_{\:D,\:1,\:J}(\mathbf{X})$, $\dots$, $\hat{m}_{\:D,\:2^{D-1},\:J}(\mathbf{X})$. (To be clear, we have: $\hat{m}_{\:1,\:1,\:J}(\mathbf{X})=\hat{m}_{\:1,\:J}(\mathbf{X},\mathbf{0},\mathbf{1})$, $\hat{m}_{\:2,\:1,\:J}(\mathbf{X})=\hat{m}_{\:2,\:J}\big(\mathbf{X},\mathbf{0},\mathbf{1}_{\:\text{-}1}\big(\hat{m}_{\:1,\:J}(\mathbf{X},\mathbf{0},\mathbf{1})\big)\big)$, etc.) 

Observe that $\hat{m}_{\:d,\:k}^J(\mathbf{X})$ is formally a function of the statistics $\{\bar{\boldsymbol{\mathsf{C}}}_{\:\boldsymbol{j}}(\mathbf{X}):\boldsymbol{j}\in\mathbb{N}_{\:J\:,p}$ for $1\leq d\leq D,1\leq k\leq 2^{d-1}$ which are statistics that can be computed entirely in parallel. Thus we have an approximate $k$-d tree construction in which the input data can be read once to compute $\bar{\boldsymbol{\mathsf{C}}}_{\:\boldsymbol{j}}(\mathbf{X})$ for each $\boldsymbol{j}\in\mathbb{N}_{\:J\:,p}$. Then, equations \ref{kd:eq:a} and \ref{kd:eq:b} are used recursively to compute $\hat{m}_{\:d,\:k}^J(\mathbf{X})$ for $1\leq d\leq D,1\leq k\leq 2^{d-1}$.

It is important to consider the metric by which approximation accuracy should be judged. For data analysis, a main purpose of the $k$-d tree construction is to divide a data set into neighborhoods containing almost equal numbers of observations. In the canonical construction of a $k$-d tree, if we start with $|\mathbf{X}|$ observations, at depth $D$, each of the cell neighborhoods contains approximately $\big(\nicefrac{1}{2^D}\big)\cdot|\mathbf{X}|$ observations (assuming we choose to discard any point on the cell boundaries). When we are approximating $m_{\:d,\:k}^J(\mathbf{X})$ by $\hat{m}_{\:d,\:k,\:J}(\mathbf{X})$, even if we are off by a small amount, as long as the eqi-cardinality of the neighborhoods is maintained, we have succeeded in constructing a sufficiently accurate $k$-d tree to serve the purpose at hand. 

Thus, instead of judging how well the approximate median $\hat{m}_{\:d,\:k,\:J}(\mathbf{X})$ approximates the exact median $m_{\:d,\:k}(\mathbf{X})$, for $1\leq d\leq D,1\leq k\leq 2^{d-1}$, we focus on how well the approximate $k$-d tree subdivides the overall data point cloud into cell neighborhoods having almost equal numbers of observations. At depth $D$ we have $2^D$ neighborhoods and so to judge the accuracy of the $k$-d tree, we can look at the range of all cell boundary counts and determine their distances from the ideal cell boundary counts of $\big(\nicefrac{1}{2^D}\big)\cdot|\mathbf{X}|$ observations for each cell.

Suppose that during the construction of a $k$-d tree, a general neighborhood $(\mathbf{a},\mathbf{b})$ containing $n$ observations gets split into disjoint sub-neighborhoods $(\mathbf{a}_{\:\text{left}},\mathbf{b}_{\:\text{left}})$ and $(\mathbf{a}_{\:\text{right}},\mathbf{b}_{\:\text{right}})$ containing $\hat{n}_{\:l}$ and $\hat{n}_{\:r}$ observations. For future reference, we will be calling the neighborhood $(\mathbf{a},\mathbf{b})$ as the parent neighborhood of its two children neighborhoods $(\mathbf{a}_{\:\text{left}},\mathbf{b}_{\:\text{left}})$ and $(\mathbf{a}_{\:\text{right}},\mathbf{b}_{\:\text{right}})$. Now from Theorem \ref{thm:n_l=n/2}, if Assumption \ref{assumption:positive_density} and  Assumption \ref{assumption:sup_null_set} holds, then given a $\epsilon>0$, we can say that, for large enough $|\mathbf{X}|$ and $J$, we will have:
\begin{align*}
&\big|\bar{\boldsymbol{\mathsf{1}}}(\mathbf{X},\mathbf{a}_{\:\text{left}},\mathbf{b}_{\:\text{left}})-\frac{1}{2}\bar{\boldsymbol{\mathsf{1}}}(\mathbf{X},\mathbf{a},\mathbf{b})\big|<\epsilon,\text{ and } \\
&\big|\bar{\boldsymbol{\mathsf{1}}}(\mathbf{X},\mathbf{a}_{\:\text{right}},\mathbf{b}_{\:\text{right}})-\frac{1}{2}\bar{\boldsymbol{\mathsf{1}}}(\mathbf{X},\mathbf{a},\mathbf{b})\big|<\epsilon.
\end{align*}
In other words, for an arbitrary interval with $n$ elements, we will have:
\begin{align*}
\big|\hat{n}_{\:l}-\frac{n}{2}\big|<\epsilon\:|\mathbf{X}|\text{ and }\big|\hat{n}_{\:r}-\frac{n}{2}\big|<\epsilon\:|\mathbf{X}|.
\end{align*}

Let $\hat{n}_{\:d,\:k,\:J}(\mathbf{X})$ denote the cell count of the $k$th cell at depth $d$, for parameter $J$, for $1\leq d\leq D,1\leq k\leq 2^d$. For consistency, we let $\hat{n}_{\:0,\:1,\:J}(\mathbf{X})=|\mathbf{X}|$, and we assume that at level $0$, there is only one cell containing all the observations, and in general, we have $2^d$ cells at level $d$. At depth $D$, consider the $k$th cell, and let $k_{\:(0)},k_{\:(1)},\dots,k_{\:(D)}$ be the sequence of parent cell indices of this cell, so that $k_{\:(0)}=1$, $k_{\:(D)}=k$ and  $1\leq k_{(d)}\leq 2^d$ for $1\leq d<D$. Then for large enough $|\mathbf{X}|$ and $J$, we can say from last paragraph that:
\begin{align*}
\big|\:\hat{n}_{\:d,\:k_{\:(d)},\:J}(\mathbf{X})-\frac{1}{2}\hat{n}_{\:d-1,\:k_{\:(d-1)},\:J}(\mathbf{X})\:\Big|<\epsilon\:|\mathbf{X}|\text{ for }1\leq d\leq D.
\end{align*}
So, for large enough $|\mathbf{X}|$ and $J$, we have:
\begin{align*}
&\hspace{-2em}\Big|\:\hat{n}_{\:D,\:k,\:J}(\mathbf{X})-\frac{1}{2^D}\hat{n}_{\:0,\:1,\:J}(\mathbf{X})\:\Big| \\
&\leq \sum_{d=1}^D\Big|\:\frac{1}{2^{D-d}}\hat{n}_{\:d,\:k_{\:(d)},\:J}(\mathbf{X})-\frac{1}{2^{D-d+1}}\hat{n}_{\:d-1,\:k_{\:(d-1)},\:J}(\mathbf{X})\:\Big| \\
&=\sum_{d=1}^D\frac{1}{2^{D-d}}\Big|\:\hat{n}_{\:d,\:k_{\:(d)},\:J}(\mathbf{X})-\frac{1}{2}\hat{n}_{\:d-1,\:k_{\:(d-1)},\:J}(\mathbf{X})\:\Big| \\
&<\sum_{d=1}^D\frac{\epsilon\:|\mathbf{X}|}{2^{d-1}}<2\epsilon\:|\mathbf{X}|.
\end{align*}
In other words, if we assume the technical conditions of Assumption \ref{assumption:positive_density} and Assumption \ref{assumption:sup_null_set}, we indeed have the desired result that
\begin{align*}
\Big|\hat{n}_{\:D,\:k,\:J}(\mathbf{X})-\frac{1}{2^D}|\mathbf{X}|\Big|< 2\epsilon\:|\mathbf{X}|\text{ for }1\leq k\leq 2^D
\end{align*}
as both $|\mathbf{X}|$ and $J$ become large. Thus our approximate $k$-d tree construction is assured to become asymptotically accurate as the data set size increases and we consider an increasingly large approximation order parameter $J$.

\section{Accuracy and computational scalability}\label{sec:performance}

We now study the accuracy and scalablity of this method of $k$-d tree construction algorithm using simulated data. An implementation of this construction in map-reduce is provided in Appendix \ref{ap:construct}.

We simulated observations $\tilde{\mathbf{x}}=(\tilde{x},\tilde{y},\tilde{z})~$ from a multivariate normal distribution with mean $\mathbf{0}_{\:3}'$ and common correlation $\rho$ between each distinct pair of variables. We considered values for $\rho$ in the set $\{0,0.25,0.5,0.75\}$, and simulated $N=3\times 10^9$ observations for each value.

\subsection{Assessment of accuracy}\label{sec:accuracy}

We first constructed a $k$-d tree for each $\rho$, and then generated box plots to observe distributions of cell counts at each level of $k$-d tree depth from $6$ to $10$. The exact $k$-d tree construction is impractical at this scale, and so we judge accuracy in terms of cell counts rather than cell boundaries.

\begin{figure}[t]
\begin{center}
\includegraphics[trim={.3in 0 0 0}, clip, width=3.5in, angle=270]{./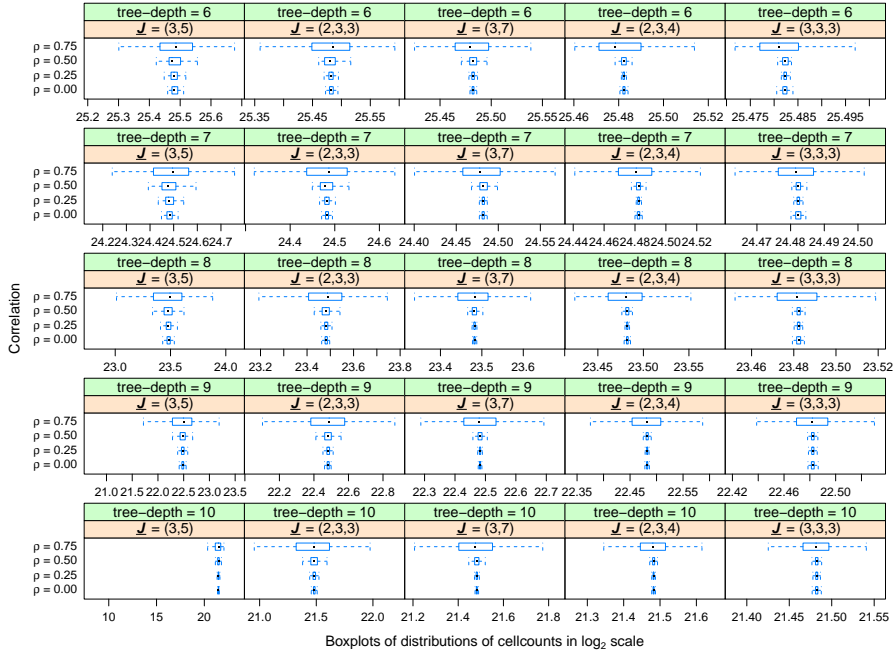}
\end{center}
\caption{\label{fig:norm_acc}Balanced tree accuracy measured by cell counts, for simulated normal data in three dimensions.}
\end{figure}

Figure \ref{fig:norm_acc} shows the results of this procedure for data simulated from the normal distribution described above, where we take $\tilde{x}$, $\tilde{y}$ and $\tilde{z}$ as our variables to construct the approximate $k$-d tree. Each panel represents a collection of $k$-d trees corresponding to depth $D$ and accuracy parameter $\underline{\boldsymbol{J}}$ (described in section \ref{sec:eff}), as different rows represent different depth and different columns represent different accuracy parameters. Inside each panel we have four different box plots corresponding to four levels of correlation $\rho$. 

For each combination of depth $D$,  accuracy parameter  $\underline{\boldsymbol{J}}$, and correlation $\rho$, the corresponding box plot demonstrates the distribution of $k$-d tree neighborhood cell counts. For example, the third (from bottom) box plot in the top-left panel demonstrates neighborhood cell counts of $k$-d tree output for depth $D=6$,  accuracy parameter $\underline{\boldsymbol{J}}=(3,5)$ and $\rho=0.5$. At depth $D=6$ we should have $2^6=64$ neighborhoods, for exact canonical construction we should have $(3\times 10^9)/64\simeq 2^{25.48232}$ observations in each neighborhood. Here, the box plot shows these neighborhood cell counts range from about $2^{25.45}$ to $2^{25.52}$, and the box labels shows half of the cell counts range from $2^{25.46}$ to $2^{25.49}$.

As a rule of thumb we can say that a narrower box plot corresponds to a more accurate $k$-d tree construction by this measure. We may also make the following general qualitative observations regarding Figure \ref{fig:norm_acc}. First, the modulus of the vector coefficient $\underline{\boldsymbol{J}}$ increases, the accuracy of the $k$-d tree output indeed generally increases. Second, the tree level increases, the accuracy of the $k$-d tree output generally decreases. This is expected because if we approximate the cell boundaries badly at lower levels, we will approximate the cell boundaries much worse at higher levels, since they are dependent on cell boundaries badly at lower levels. Finally, as correlation $\rho$ increases, the accuracy decreases in general. (However, this is harder to identify at higher levels.)

\subsection{Computational scalability}\label{sec:runtime}

\begin{figure}[t]
\begin{center}
\includegraphics[trim={.3in .3in 0 0}, clip, width=3.5in, angle=270]{./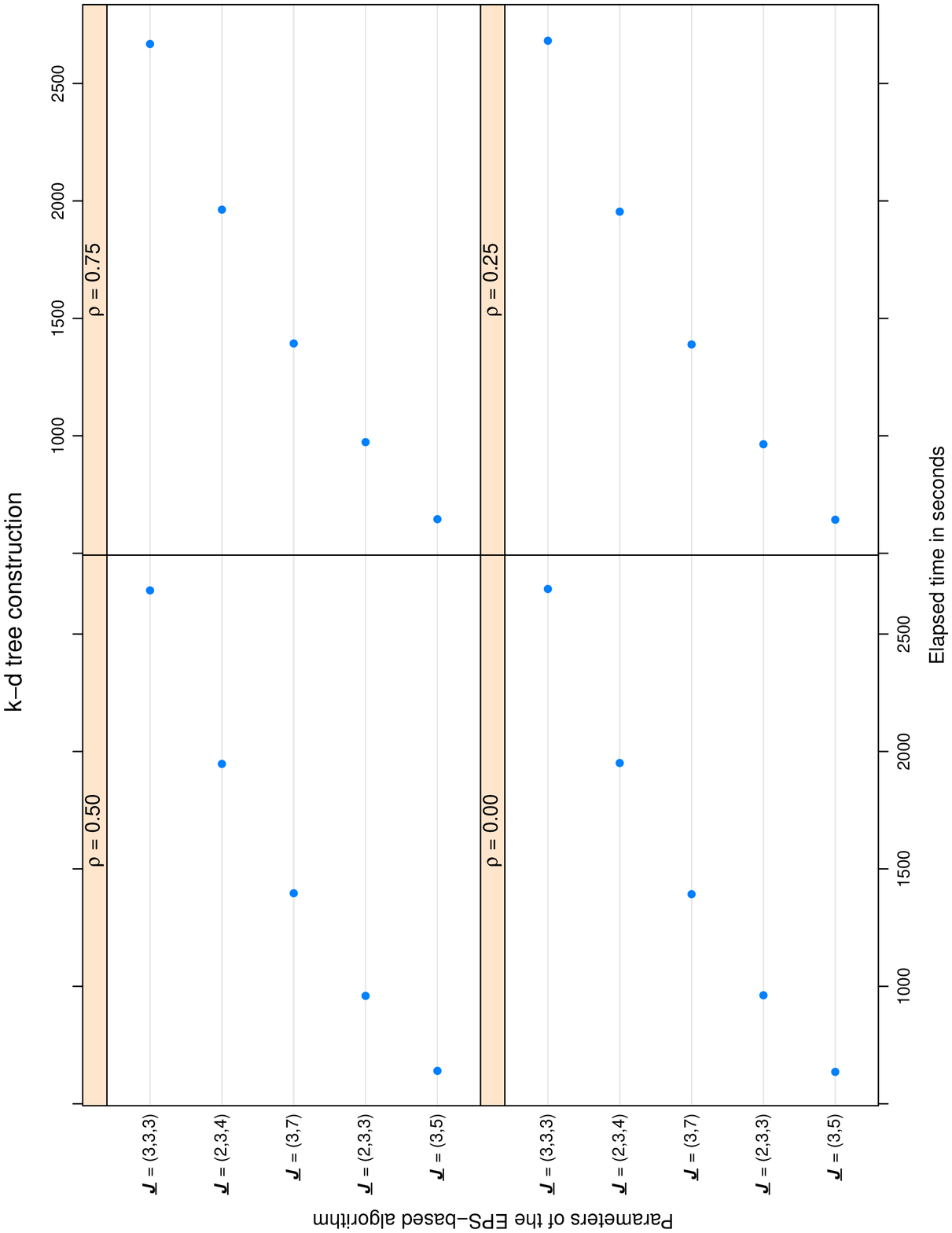}
\end{center}
\caption{\label{fig:norm_run}Running times for approximate $k$-d tree construction with increasing modulus of accuracy parameter $\underline{\boldsymbol{J}}$.}
\end{figure}

Figure \ref{fig:norm_run} quantifies the running time of this approximate $k$-d tree construction for different accuracy parameters $\underline{\boldsymbol{J}}$ and correlations $\rho$. Observe that the running time does not depend heavily on depth $D$, because the map-reduce step is the major driver, and post-reduce local optimizations here take negligible time compared to the main map-reduce step. In general, running time increases as $p$ increases and the modulus of the accuracy parameter $\underline{\boldsymbol{J}}$ increases, but that $\rho$ does not appear to have any major influence.

\section{Discussion}\label{sec:disc}

In this article we have described a scalable mechanism to construct $k$-d trees for large data sets, which can be implemented naturally within a distributed computing environment using programming models such as map-reduce. While the median computations necessary for exact construction of a canonical $k$-d tree cannot be done in parallel, we have instead proposed a parallel median approximation algorithm which enables all necessary quantities to construct a canonical $k$-d tree to be computed using only a single pass over the input data. We have shown this approach to come with a variety of theoretical guarantees as described in section \ref{sec:main}, focusing in particular on its ability to produce balanced $k$-d trees as discussed in section \ref{sec:algorithm}. In turn, section \ref{sec:performance} has provided a simulation study quantifying both the accuracy and speed of a concrete map-reduce implementation using artificially generated data, as a function of various algorithm parameters and data properties.

\appendix

\section{Proofs of results from section \ref{subsec:approx}}\label{ap:approx_proofs}

First, we consider a Fourier expansion of the indicator function $\mathsf{1}(z<0)$. We have, for $-\pi<z<\pi; z\neq 0$:
\begin{align*}
\mathsf{1}(z<0)=\mathsf{1}(z\leq 0)=\frac{1}{2}-\frac{2}{\pi}\sum_{j=1}^{\infty}\frac{\sin\big((2j-1)z\big)}{(2j-1)}
\end{align*}
Now if $x\in(0,1)$ and $c$ is a constant in $(0,1)$, we have $-\pi<-2<x-c<2<\pi$, or $x-c\in(-\pi,\pi)$. So we can expand $\mathsf{1}(x<c)=\mathsf{1}(x-c<0)$ and $\mathsf{1}(x\leq c)=\mathsf{1}(x-c\leq 0)$ as follows:
\begin{align}\label{proof:eq:1}
\mathsf{1}(x<c)&=\mathsf{1}(x\leq c)=\frac{1}{2}-\sum_{j=1}^{\infty}\frac{2}{\pi(2j-1)}\sin\big((2j-1)(x-c)\big)\\
&=\frac{1}{2}+\sum_{j=1}^{\infty}\bigg(\frac{2\sin\big((2j-1)c\big)}{\pi(2j-1)}\cdot\cos\big((2j-1)x\big) \nonumber \\
&\hspace{5cm}+\frac{-2\cos\big((2j-1)c\big)}{\pi(2j-1)}\cdot\sin\big((2j-1)x\big)\bigg). \nonumber
\end{align}

The above expression is recognizable as a convergent $L^2$ approximation from the class of sum of separable trigonometric functions of the form $\sum_{j}f_{\:j}(c)\cdot g_{\:j}(x)$ \cite[Example 3.10]{Chakravorty2021}.

\subsection{Proof of Lemma \ref{lemma:1d:exp}}
\begin{proof}
(A)\ Let $x\in(0,1)$ and consider arbitrary constants $a,b$ such that $0\leq a<x<b\leq 1$. Observe that if $a<b$, then $\mathsf{1}(a<x<b)=\mathsf{1}(a<x)-\mathsf{1}(b\leq x)$. The idea is to approximate these terms along the lines of equation \ref{proof:eq:1}. However, if $a=0$, then we already know that as $x\in(0,1)$, so, we must have $a<x$, and in that case $\mathsf{1}(a<x)=1$. Similarly if $b=1$, then $\mathsf{1}(b\leq x)=0$. Now we have:
\begin{align}\label{proof:eq:2}
&\mathsf{1}(a<x<b)=\mathsf{1}(a=0)\cdot\mathsf{1}(b=1)\cdot 1+\mathsf{1}(a>0)\cdot\mathsf{1}(b=1)\cdot\mathsf{1}(a<x) \\
&+\mathsf{1}(a=0)\cdot\mathsf{1}(b<1)\cdot\mathsf{1}(x< b)+\mathsf{1}(a>0)\cdot\mathsf{1}(b<1)\cdot\mathsf{1}(a<x<b)\nonumber \\
&=\big(1-\mathsf{1}(a>0)\big)\cdot\big(1-\mathsf{1}(b<1)\big)+\mathsf{1}(a>0)\cdot\big(1-\mathsf{1}(b<1)\big)\cdot\mathsf{1}(a<x)\nonumber \\
&+\big(1-\mathsf{1}(a>0)\big)\cdot\mathsf{1}(b<1)\cdot\big(1-\mathsf{1}(b\leq x)\big)\nonumber \\
&+\mathsf{1}(a>0)\cdot\mathsf{1}(b<1)\cdot\big(\mathsf{1}(a<x)-\mathsf{1}(b\leq x)\big)\nonumber \\
&=1-\mathsf{1}(a>0)+\mathsf{1}(a>0)\cdot\mathsf{1}(a<x)-\mathsf{1}(b<1)\cdot\mathsf{1}(b\leq x).\nonumber
\end{align}
If we replace $\mathsf{1}(a<x)$ and $\mathsf{1}(b\leq x)$ in equation \ref{proof:eq:2}, with the expansion in equation \ref{proof:eq:1}, we have:
\begin{align*}
&\mathsf{1}(a<x<b)=\bigg(1-\frac{\mathsf{1}(a>0)}{2}-\frac{\mathsf{1}(b<1)}{2}\bigg) \\
&+\sum_{j=1}^{\infty}\frac{2.\mathsf{1}(b<1)\sin\big((2j-1)b\big)-2.\mathsf{1}(a>0)\sin\big((2j-1)a\big)}{\pi(2j-1)}\cos\big((2j-1)x\big) \\
&+\sum_{j=1}^{\infty}\frac{2.\mathsf{1}(a>0)\cos\big((2j-1)a\big)-2.\mathsf{1}(b<1)\cos\big((2j-1)b\big)}{\pi(2j-1)}\sin\big((2j-1)x\big).
\end{align*}
Let us define, for $x\in(0,1)$ and constants $a\in[0,1),b\in(0,1]$, the functions:
\begin{align*}
&\mathsf{c}_{\:0}(x)=1,\quad \mathsf{c}_{\:2j-1}(x)=\cos\big((2j-1)x\big),\quad \mathsf{c}_{\:2j}(x)=\sin\big((2j-1)x\big);\text{ and } \\
&\mathsf{g}_{\:0}(a,b)=1-\frac{1}{2}\big(\mathsf{1}(a>0)+\mathsf{1}(b<1)\big), \\
&\mathsf{g}_{\:2j-1}(a,b)=\frac{2}{\pi(2j-1)}\Big(\mathsf{1}(b<1)\sin\big((2j-1)b\big)-\mathsf{1}(a>0)\sin\big((2j-1)a\big)\Big), \\ 
&\mathsf{g}_{\:2j}(a,b)=\frac{2}{\pi(2j-1)}\Big(\mathsf{1}(a>0)\cos\big((2j-1)a\big)-\mathsf{1}(b<1)\cos\big((2j-1)b\big)\Big).
\end{align*}
Then we have for $x\in(0,1)$, $a\in[0,1)$, $b\in(0,1]$ and $x\neq a$, $x\neq b$
\begin{align*}
\mathsf{1}(a<x<b)=\sum_{j=0}^{\infty}\mathsf{c}_{\:j}(x)\cdot\mathsf{g}_{\:j}(a,b).
\end{align*}
\end{proof}

\subsection{Proof of Lemma \ref{lemma:1d:unif}}

\begin{proof}
(A)\ From \citet[part (A) of Lemma E.3]{Chakravorty2021}, we realize that if $0<a<1$, then $\mathsf{1}_{\:J}(x-a)$ uniformly converges to $\mathsf{1}(0<x-a)$ if $x-a\in(-\pi,-\delta)\bigcup(\delta,\pi)$ or $x\in(a-\pi,a-\delta)\bigcup(a+\delta,a+\pi)$. Observe that $a-\pi<0$ and $1<a+\pi$ for $0<a<1$. If $a=0$ then $\mathsf{1}_{\:J}(x-a)=\mathsf{1}(x<a)=0$. So, $\mathsf{1}_{\:J}(x-a)$ uniformly converges to $\mathsf{1}(0<x-a)$, if $x\in U_{\delta,(a,b)}$. Similarly, we can show $\mathsf{1}_{\:J}(x-b)$ uniformly converges to $\mathsf{1}(0\leq x-b)$, if $x\in U_{\delta,(a,b)}$. Then, $\mathsf{1}^{J}(x,a,b)$ converges uniformly to its limit $\mathsf{1}(a<x<b)$ in $U_{\delta,(a,b)}$.

(B)\ Since $\mathsf{1}_{\:J}(x,a,b)=1-\mathsf{1}(a>0)+\mathsf{1}(a>0)\mathsf{1}_{\:J}(x-a)-\mathsf{1}(b<1)\mathsf{1}_{\:J}(x-b)$, we conclude from \citet[part (B) of Lemma E.3]{Chakravorty2021} that the sequence of functions $\mathsf{1}_{\:J}(x,a,b)$ is uniformly bounded in the interval $(0,1)$.
\end{proof}

\subsection{Proof of Lemma \ref{lemma:pd:unif}}
\begin{proof}
For $\mathbf{a}\in[\mathbf{0},\mathbf{1}),\mathbf{b}\in(\mathbf{0},\mathbf{1}]$, define
\begin{align*}
P_{\:(\mathbf{a},\mathbf{b})}=\times_{l=1}^pP_{\:(a_{\:l},b_{\:l})}, \quad
U_{\:\delta,\:(\mathbf{a},\mathbf{b})}=\times_{l=1}^pU_{\:\delta,\:(a_{\:l},b_{\:l})}.
\end{align*}
Now fix $\delta>0$, and for $1\leq l\leq p$, let us define
\begin{align*}
f_{\:J,\:l}(\mathbf{x},\mathbf{a},\mathbf{b})=\mathsf{1}_{\:J}(x_{\:l},a_{\:l},b_{\:l}), \quad
f_{\:l}(\mathbf{x},\mathbf{a},\mathbf{b})=\mathsf{1}(a_{\:l}<x_{\:l}<b_{\:l}).
\end{align*}
Thus we have $\mathsf{1}_{\:J}(\mathbf{x},\mathbf{a},\mathbf{b})=\prod_{l=1}^pf_{\:J,\:l}(\mathbf{x},\mathbf{a},\mathbf{b})$ and $\mathsf{1}(\mathbf{x},\mathbf{a},\mathbf{b})=\prod_{l=1}^pf_{\:l}(\mathbf{x},\mathbf{a},\mathbf{b})$.

Now, from parts (A) and (B) of Lemma \ref{lemma:1d:exp}, we know that $\mathsf{1}_{\:J}(x_{\:l},a_{\:l},b_{\:l})$ converges uniformly to $\mathsf{1}(a_{\:l}<x_{\:l}<b_{\:l})$ when $x_{\:l}\in U_{\delta,(a_{\:l},b_{\:l})}$, and furthermore is uniformly bounded for $1\leq l\leq p$. So, $f_{\:J,\:l}(\mathbf{x},\mathbf{a},\mathbf{b})$ converges uniformly to $f_{\:l}(\mathbf{x},\mathbf{a},\mathbf{b})$ when $\mathbf{x}\in U_{\:\delta,\:(\mathbf{a},\mathbf{b})}$, and is uniformly bounded for $1\leq l\leq p$. Now, we apply the result of \citet[Theorem 7.9]{Rudin1976} for a product of $p$ functions to conclude that $\mathsf{1}_{\:J}(\mathbf{x},\mathbf{a},\mathbf{b})$ converges uniformly to $\mathsf{1}(\mathbf{a}<\mathbf{x}<\mathbf{b})$ for $\mathbf{x}\in U_{\:\delta,\:(\mathbf{a},\mathbf{b})}$.
\end{proof}

\section{Proofs of results from section \ref{subsec:stoch}}\label{ap:stoch_proofs}

\subsection{Proof of Theorem \ref{thm:n_l=n_r}}\label{pf:n_l=n_r}

\begin{proof}
(A) Let $\epsilon>0$. From part (B) of Lemma \ref{lemma:pd:unif}, we know that $\mathsf{1}_{\:J}(\mathbf{x},\mathbf{a},\mathbf{b})$ is uniformly bounded on $(\mathbf{0},\mathbf{1})$. Since the indicator function $\mathsf{1}(\mathbf{x},\mathbf{a},\mathbf{b})$ is also bounded, the absolute difference $|\mathsf{e}_{\:J}(\mathbf{x},\mathbf{a},\mathbf{b})|$ is hence uniformly bounded in $(\mathbf{0},\mathbf{1})$. Let this bound be denoted by $M$, so that $\big|\:\mathsf{e}_{\:J}(\mathbf{x},\mathbf{a},\mathbf{b})\:\big|<M$, for any $J$ and any $\mathbf{x}\in(\mathbf{0},\mathbf{1})$. 

Now, since $P$ is absolutely continuous with respect to $\lambda$, there exists an $\eta>0$ such that if $\lambda(A)<\eta$, then $P(A)<\nicefrac{\epsilon}{(2\cdot M)}$ for any $A\in\mathbb{B}(\mathbf{0},\mathbf{1})$. Let us hence pick a $\delta$ such that $\lambda\big({Q_{\:\delta,\:(\mathbf{a},\mathbf{b})}}'\big)=(2\cdot\delta)^p<\eta$, so that we have $P\big({Q_{\:\delta,\:(\mathbf{a},\mathbf{b})}}'\big)<\nicefrac{\epsilon}{2M}$. Next, because of uniform convergence on $Q_{\:\delta,\:(\mathbf{a},\mathbf{b})}$, by part (A) of Lemma \ref{lemma:pd:unif} we can choose a $J$ large enough to have $\big|\:\mathsf{e}_{\:J}(\mathbf{x},\mathbf{a},\mathbf{b})\:\big|=\big|\:\mathsf{1}(\mathbf{x},\mathbf{a},\mathbf{b})-\mathsf{1}_{\:J}(\mathbf{x},\mathbf{a},\mathbf{b})\:\big|<\nicefrac{\epsilon}{2}$, for any $\mathbf{x}\in Q_{\:\delta,\:(\mathbf{a},\mathbf{b})}$. Then we have:
\begin{align*}
&\hspace{-1em}\big|\:E_{\:P}\big(\mathsf{e}_{\:J}(\tilde{\mathbf{x}},\mathbf{a},\mathbf{b})\big)\:\big|=\big|\int\limits_{(\mathbf{0},\mathbf{1})}\mathsf{e}_{\:J}(\mathbf{x},\mathbf{a},\mathbf{b})\:\mathrm{d}P\:\big| \\
&=\big|\int\limits_{Q_{\:\delta,\:(\mathbf{a},\mathbf{b})}}\mathsf{e}_{\:J}(\mathbf{x},\mathbf{a},\mathbf{b})\:\mathrm{d}P+\int\limits_{{Q_{\:\delta,\:(\mathbf{a},\mathbf{b})}}'}\mathsf{e}_{\:J}(\mathbf{x},\mathbf{a},\mathbf{b}) \:\mathrm{d}P\:\big| \\
&\leq \big|\int\limits_{Q_{\:\delta,\:(\mathbf{a},\mathbf{b})}}\mathsf{e}_{\:J}(\mathbf{x},\mathbf{a},\mathbf{b})\:\mathrm{d}P\:\big|+\big|\int\limits_{{Q_{\:\delta,\:(\mathbf{a},\mathbf{b})}}'}\mathsf{e}_{\:J}(\mathbf{x},\mathbf{a},\mathbf{b}) \:\mathrm{d}P\:\big| \\
&\leq\int\limits_{Q_{\:\delta,\:(\mathbf{a},\mathbf{b})}}|\:\mathsf{e}_{\:J}(\mathbf{x},\mathbf{a},\mathbf{b})\:|\:\mathrm{d}P+\int\limits_{{Q_{\:\delta,\:(\mathbf{a},\mathbf{b})}}'}|\:\mathsf{e}_{\:J}(\mathbf{x},\mathbf{a},\mathbf{b})\:|\:\mathrm{d}P \\
&<\int\limits_{Q_{\:\delta,\:(\mathbf{a},\mathbf{b})}}\nicefrac{\epsilon}{2}\cdot\mathrm{d}P+\int\limits_{{Q_{\:\delta,\:(\mathbf{a},\mathbf{b})}}'}M\cdot\mathrm{d}P=\nicefrac{\epsilon}{2}\cdot P\big(Q_{\:\delta,\:(\mathbf{a},\mathbf{b})}\big)
+M\cdot P\big({Q_{\:\delta,\:(\mathbf{a},\mathbf{b})}}'\big) \\
&\leq\nicefrac{\epsilon}{2}\cdot 1+M\cdot P\big({Q_{\:\delta,\:(\mathbf{a},\mathbf{b})}}'\big)<\nicefrac{\epsilon}{2}+M\cdot\nicefrac{\epsilon}{2M}<\epsilon.
\end{align*}

Consequently, we conclude that $E_{\:P}(\mathsf{e}_{\:J}\big(\tilde{\mathbf{x}},\mathbf{a},\mathbf{b})\big)\to 0$ as $J\to\infty$.

(B) Recall that the set of all $\mathbf{x}_{\:i}$ for $i\in I$ comprise independent and identically distributed observations of the random variable $\tilde{\mathbf{x}}$. Since the sequence $\Big\{E_{\:P}(\mathsf{e}_{\:J}\big(\tilde{\mathbf{x}},\mathbf{a},\mathbf{b})\big)\Big\}_{J=0}^{\infty}$ converges to a limit, it is uniformly bounded. Then, an application of Kolmogorov's strong law of large numbers gives us: 
\begin{align*}
\lim_{|X|\to\infty}\bar{\mathsf{E}}_{\:J}(\mathbf{X},\mathbf{a},\mathbf{b})\overset{a.s.}{=}E_{\:P}\big(\mathsf{e}_{\:J}(\tilde{\mathbf{x}},\mathbf{a},\mathbf{b})\big)\:[P].
\end{align*}
Then, from part (A), we have,
\begin{align*}
\lim_{J\to\infty}\big(\lim_{|X|\to\infty}\bar{\mathsf{E}}_{\:J}(\mathbf{X},\mathbf{a},\mathbf{b})\big)\overset{a.s.}{=}0\:[P].
\end{align*}

(C) From part (B), we have:
\begin{align}\label{main:proof:eq:a}
&\hspace{-1em}\lim_{J\to\infty}\lim_{|\mathbf{X}|\to\infty}\bar{\mathsf{E}}_{\:J}\Big(\mathbf{X},\mathbf{a},\mathbf{b}_{\:\text{-}t}\big(\hat{m}_{\:t,\:J}(\mathbf{X},\mathbf{a},\mathbf{b})\big)\Big)\overset{a.s.}{=}0\:[P]\nonumber \\
&\leftrightarrow\lim_{J\to\infty}\lim_{|\mathbf{X}|\to\infty}\bigg(\bar{\boldsymbol{\mathsf{1}}}_{\:J}\Big(\mathbf{X},\mathbf{a},\mathbf{b}_{\:\text{-}t}\big(\hat{m}_{\:t,\:J}(\mathbf{X},\mathbf{a},\mathbf{b})\big)\Big)\nonumber \\
&\hspace{4cm}-\bar{\boldsymbol{\mathsf{1}}}\Big(\mathbf{X},\mathbf{a},\mathbf{b}_{\:\text{-}t}\big(\hat{m}_{\:t,\:J}(\mathbf{X},\mathbf{a},\mathbf{b})\big)\Big)\bigg)\overset{a.s.}{=}0\:[P]\nonumber \\
&\leftrightarrow\lim_{J\to\infty}\lim_{|\mathbf{X}|\to\infty}\bar{\boldsymbol{\mathsf{1}}}_{\:J}\Big(\mathbf{X},\mathbf{a},\mathbf{b}_{\:\text{-}t}\big(\hat{m}_{\:t,\:J}(\mathbf{X},\mathbf{a},\mathbf{b})\big)\Big) \\
&\hspace{4cm}\overset{a.s.}{=}\lim_{J\to\infty}\lim_{|\mathbf{X}|\to\infty}\bar{\boldsymbol{\mathsf{1}}}\Big(\mathbf{X},\mathbf{a},\mathbf{b}_{\:\text{-}t}\big(\hat{m}_{\:t,\:J}(\mathbf{X},\mathbf{a},\mathbf{b})\big)\Big)\:[P]\nonumber.
\end{align}

Similarly, we also have:
\begin{align}\label{main:proof:eq:b}
&\lim_{J\to\infty}\lim_{|\mathbf{X}|\to\infty}\bar{\boldsymbol{\mathsf{1}}}_{\:J}\Big(\mathbf{X},\mathbf{a}_{\:\text{-}t}\big(\hat{m}_{\:t,\:J}(\mathbf{X},\mathbf{a},\mathbf{b})\big),\mathbf{b}\Big) \\
&\hspace{4cm}\overset{a.s.}{=}\lim_{J\to\infty}\lim_{|\mathbf{X}|\to\infty}\bar{\boldsymbol{\mathsf{1}}}\Big(\mathbf{X},\mathbf{a}_{\:\text{-}t}\big(\hat{m}_{\:t,\:J}(\mathbf{X},\mathbf{a},\mathbf{b})\big),\mathbf{b}\Big)\:[P]\nonumber.
\end{align}

Now, from the definition of $\hat{m}_{\:t,\:J}(\mathbf{X},\mathbf{a},\mathbf{b})$ we have:
\begin{align}\label{main:proof:eq:c}
\bar{\boldsymbol{\mathsf{1}}}_{\:J}\Big(\mathbf{X}
,\mathbf{a}_{\:\text{-}t}\big(\hat{m}_{\:t,\:J}(\mathbf{X},\mathbf{a},\mathbf{b})\big),\mathbf{b}\Big)=\bar{\boldsymbol{\mathsf{1}}}_{\:J}\Big(\mathbf{X},\mathbf{a},\mathbf{b}_{\:\text{-}t}\big(\hat{m}_{\:t,\:J}(\mathbf{X},\mathbf{a},\mathbf{b})\big)\Big).
\end{align}

Equation \ref{main:proof:eq:c} ensures equality of the left-hand sides of equations \ref{main:proof:eq:a} and \ref{main:proof:eq:b}. So, we can equate the right hand sides of equations \ref{main:proof:eq:a} and \ref{main:proof:eq:b} to conclude:
\begin{align*}
&\lim_{J\to\infty}\lim_{|\mathbf{X}|\to\infty}\bar{\boldsymbol{\mathsf{1}}}\Big(\mathbf{X},\mathbf{a},\mathbf{b}_{\:\text{-}t}\big(\hat{m}_{\:t,\:J}(\mathbf{X},\mathbf{a},\mathbf{b})\big)\Big) \\
&\hspace{4cm}\overset{a.s.}{=}\lim_{J\to\infty}\lim_{|\mathbf{X}|\to\infty}\bar{\boldsymbol{\mathsf{1}}}\Big(\mathbf{X},\mathbf{a}_{\:\text{-}t}\big(\hat{m}_{\:t,\:J}(\mathbf{X},\mathbf{a},\mathbf{b})\big),\mathbf{b})\Big).
\end{align*}
\end{proof} 

\subsection{Proof of Theorem \ref{thm:n_l=n/2}}\label{pf:n_l=n/2}

\begin{proof}
Let $\epsilon>0$, and note that
\begin{align*}
&(\mathbf{a},\mathbf{b})=\Big(\mathbf{a},\mathbf{b}_{\:\text{-}t}\big(\hat{m}_{\:t,\:J}(\mathbf{X},\mathbf{a},\mathbf{b})\big)\Big)\cup\Big(\mathbf{a}_{\:\text{-}t}\big(\hat{m}_{\:t,\:J}(\mathbf{X},\mathbf{a},\mathbf{b})\big),\mathbf{b}\Big) \\
&\hspace{6cm}\cup\mathcal{A}_{\:t}\big(\hat{m}_{\:t,\:J}(\mathbf{X},\mathbf{a},\mathbf{b}),\mathbf{a},\mathbf{b}\big).
\end{align*}

In other words:
\begin{multline*}
\mathsf{1}\big(\mathbf{x}_{\:i}\in(\mathbf{a},\mathbf{b})\big)=\mathsf{1}\bigg(\mathbf{x}_{\:i}\in\Big(\mathbf{a},\mathbf{b}_{\:\text{-}t}\big(\hat{m}_{\:t,\:J}(\mathbf{X},\mathbf{a},\mathbf{b})\big)\Big)\bigg) \\
+\mathsf{1}\bigg(\mathbf{x}_{\:i}\in\Big(\mathbf{a}_{\:\text{-}t}\big(\hat{m}_{\:t,\:J}(\mathbf{X},\mathbf{a},\mathbf{b})\big),\mathbf{b}\Big)\bigg)+\mathsf{1}\Big(\mathbf{x}_{\:i}\in\mathcal{A}_{\:t}\big(\hat{m}_{\:t,\:J}(\mathbf{X},\mathbf{a},\mathbf{b}),\mathbf{a},\mathbf{b}\big)\Big).
\end{multline*}

Summing over $i\in I$, and dividing both sides by $|\mathbf{X}|$, we obtain:
\begin{multline}\label{main:proof:eq:d}
\bar{\boldsymbol{\mathsf{1}}}(\mathbf{X},\mathbf{a},\mathbf{b})=\bar{\boldsymbol{\mathsf{1}}}\Big(\mathbf{X},\mathbf{a},\mathbf{b}_{\:\text{-}t}\big(\hat{m}_{\:t,\:J}(\mathbf{X},\mathbf{a},\mathbf{b})\big)\Big) +\bar{\boldsymbol{\mathsf{1}}}\Big(\mathbf{X},\mathbf{a}_{\:\text{-}t}\big(\hat{m}_{\:t,\:J}(\mathbf{X},\mathbf{a},\mathbf{b})\big),\mathbf{b}\Big) \\
+\frac{1}{|\mathbf{X}|}\sum_{i\in I}\mathbf{\mathsf{1}}\Big(\mathbf{x}_{\:i}\in\mathcal{A}_{\:t}\big(\hat{m}_{\:t,\:J}(\mathbf{X},\mathbf{a},\mathbf{b}),\mathbf{a},\mathbf{b}\big)\Big).
\end{multline}

Now, $\mathcal{A}_{\:t}\big(\hat{m}_{\:t,\:J}(\mathbf{X},\mathbf{a},\mathbf{b}),\mathbf{a},\mathbf{b}\big)\in\mathcal{N}_{\:p}$ for any $(\mathbf{a},\mathbf{b})\subseteq(\mathbf{0},\mathbf{1})$. So, by Assumption \ref{assumption:sup_null_set} there exists a null set $\mathbb{M}_{\:1}\subset\Omega$, such that, for each $w\in\Omega\setminus\mathbb{M}_{\:1}$, there exists an $n_{\:w,\:2}\in\mathbb{N}$, such that if $n>n_{\:w,\:2}$, then for any $(\mathbf{a},\mathbf{b})\subseteq(\mathbf{0},\mathbf{1})$, we have:
\begin{align*}
\bigg|\:\frac{1}{|\mathbf{X}(w,n)|}\sum_{i\in I}\mathbf{\mathsf{1}}\Big(\mathbf{x}_{\:i}(w)\in\mathcal{A}_{\:t}\big(\hat{m}_{\:t,\:J}(\mathbf{X}(w,n),\mathbf{a},\mathbf{b}),\mathbf{a},\mathbf{b}\big)\Big)\:\bigg|<\frac{2\epsilon}{3}.
\end{align*}

Then, from equation \ref{main:proof:eq:d}, we have for $w\in\Omega\setminus\mathbb{M}_{\:1}$ and $n>n_{\:w,\:2}$:
\begin{multline}\label{main:proof:eq:e}
\Big|\:\bar{\boldsymbol{\mathsf{1}}}\big(\mathbf{X}(w,n),\mathbf{a},\mathbf{b}\big)-\bar{\boldsymbol{\mathsf{1}}}\Big(\mathbf{X}(w,n),\mathbf{a},\mathbf{b}_{\:\text{-}t}\big(\hat{m}_{\:t,\:J}(\mathbf{X}(w,n),\mathbf{a},\mathbf{b})\big)\Big) \\
-\bar{\boldsymbol{\mathsf{1}}}\Big(\mathbf{X}(w,n),\mathbf{a}_{\:\text{-}t}\big(\hat{m}_{\:t,\:J}(\mathbf{X}(w,n),\mathbf{a},\mathbf{b})\big),\mathbf{b}\Big)\:\Big|<\frac{2\epsilon}{3}.
\end{multline}

From the definition of $\hat{m}_{\:t,\:J}\big(\mathbf{X}(w,n),\mathbf{a},\mathbf{b}\big)$, we have:
\begin{multline}\label{main:proof:eq:f}
\bar{\boldsymbol{\mathsf{1}}}\Big(\mathbf{X}(w,n),\mathbf{a},\mathbf{b}_{\:\text{-}t}\big(\hat{m}_{\:t,\:J}(\mathbf{X}(w,n),\mathbf{a},\mathbf{b})\big)\Big) \\
=\bar{\boldsymbol{\mathsf{1}}}\Big(\mathbf{X}(w,n),\mathbf{a}_{\:\text{-}t}\big(\hat{m}_{\:t,\:J}(\mathbf{X}(w,n),\mathbf{a},\mathbf{b})\big),\mathbf{b}\Big).
\end{multline}

From Theorem \ref{thm:n_l=n_r} part (B), we know that there exists an integer $J_{\:\epsilon,\:P}$ such that for $J>J_{\:\epsilon,\:P}$, there exists integer $n_{\:w,\:J}$, such that if $n>n_{\:w,\:J}$, then $\bar{\mathcal{E}}_{\:J}\big(\mathbf{X}(w,n),\mathbf{a},\mathbf{b}\big)<\epsilon$, so that $|\bar{\boldsymbol{\mathsf{1}}}_{\:J}\big(\mathbf{X}(w,n),\mathbf{a},\mathbf{b}\big)-\bar{\boldsymbol{\mathsf{1}}}\big(\mathbf{X}(w,n),\mathbf{a},\mathbf{b}\big)|<\epsilon$ for $(\mathbf{a},\mathbf{b})\in(\mathbf{0},\mathbf{1})$. In particular, for $J>J_{\:w,\:J}$ and $n>n_{\:w,\:J}$ we have:
\begin{align}\label{main:proof:eq:g}
&\Big|\:\bar{\boldsymbol{\mathsf{1}}}\Big(\mathbf{X}(w,n),\mathbf{a},\mathbf{b}_{\:\text{-}t}\big(\hat{m}_{\:t,\:J}(\mathbf{X}(w,n),\mathbf{a},\mathbf{b})\big)\Big) \nonumber \\ &\hspace{3cm}-\bar{\boldsymbol{\mathsf{1}}}_{\:J}\Big(\mathbf{X}(w,n),\mathbf{a},\mathbf{b}_{\:\text{-}t}\big(\hat{m}_{\:t,\:J}(\mathbf{X}(w,n),\mathbf{a},\mathbf{b})\big)\Big)\:\Big|<\frac{2\epsilon}{3}.
\end{align}
We likewise have:
\begin{align}\label{main:proof:eq:h}
&\Big|\:\bar{\boldsymbol{\mathsf{1}}}\Big(\mathbf{X}(w,n),\mathbf{a}_{\:\text{-}t}\big(\hat{m}_{\:t,\:J}(\mathbf{X}(w,n),\mathbf{a},\mathbf{b})\big),\mathbf{b}\Big) \nonumber \\ &\hspace{3cm}-\bar{\boldsymbol{\mathsf{1}}}_{\:J}\Big(\mathbf{X}(w,n),\mathbf{a}_{\:\text{-}t}\big(\hat{m}_{\:t,\:J}(\mathbf{X}(w,n),\mathbf{a},\mathbf{b})\big),\mathbf{b}\Big)\:\Big|<\frac{2\epsilon}{3}.
\end{align}

Then, from equations \ref{main:proof:eq:e}--\ref{main:proof:eq:h}, we have:
\begin{align*}
&\bigg|\:\bar{\boldsymbol{\mathsf{1}}}\big(\mathbf{X}(w,n),\mathbf{a},\mathbf{b}\big)-2\bar{\boldsymbol{\mathsf{1}}}\bigg(\mathbf{X}(w,n),\mathbf{a},\mathbf{b}_{\:\text{-}t}\Big(\hat{m}_{\:t,\:J}\big(\mathbf{X}(w,n),\mathbf{a},\mathbf{b}\big)\Big)\bigg)\:\bigg| \\
&=\Bigg|\:\Bigg(\bar{\boldsymbol{\mathsf{1}}}\big(\mathbf{X}(w,n),\mathbf{a},\mathbf{b}\big)-\bar{\boldsymbol{\mathsf{1}}}\bigg(\mathbf{X}(w,n),\mathbf{a},\mathbf{b}_{\:\text{-}t}\Big(\hat{m}_{\:t,\:J}\big(\mathbf{X}(w,n),\mathbf{a},\mathbf{b}\big)\Big)\bigg) \\
&\hspace{4cm}-\bar{\boldsymbol{\mathsf{1}}}\bigg(\mathbf{X}(w,n),\mathbf{a}_{\:\text{-}t}\Big(\hat{m}_{\:t,\:J}\big(\mathbf{X}(w,n),\mathbf{a},\mathbf{b}\big)\Big),\mathbf{b}\bigg)\Bigg) \\
&-\Bigg(\bar{\boldsymbol{\mathsf{1}}}\bigg(\mathbf{X}(w,n),\mathbf{a},\mathbf{b}_{\:\text{-}t}\Big(\hat{m}_{\:t,\:J}\big(\mathbf{X}(w,n),\mathbf{a},\mathbf{b}\big)\Big)\bigg) \\
&\hspace{4cm}-\bar{\boldsymbol{\mathsf{1}}}_{\:J}\bigg(\mathbf{X}(w,n),\mathbf{a},\mathbf{b}_{\:\text{-}t}\Big(\hat{m}_{\:t,\:J}\big(\mathbf{X}(w,n),\mathbf{a},\mathbf{b}\big)\Big)\bigg)\Bigg) \\
&+\Bigg(\bar{\boldsymbol{\mathsf{1}}}\bigg(\mathbf{X}(w,n),\mathbf{a}_{\:\text{-}t}\Big(\hat{m}_{\:t,\:J}\big(\mathbf{X}(w,n),\mathbf{a},\mathbf{b}\big)\Big),\mathbf{b}\bigg) \\ &\hspace{4cm}-\bar{\boldsymbol{\mathsf{1}}}_{\:J}\bigg(\mathbf{X}(w,n),\mathbf{a}_{\:\text{-}t}\Big(\hat{m}_{\:t,\:J}\big(\mathbf{X}(w,n),\mathbf{a},\mathbf{b}\big)\Big),\mathbf{b}\bigg)\Bigg)
\end{align*}
\begin{align*}
&+\Bigg(\bar{\boldsymbol{\mathsf{1}}}_{\:J}\bigg(\mathbf{X}(w,n),\mathbf{a},\mathbf{b}_{\:\text{-}t}\Big(\hat{m}_{\:t,\:J}\big(\mathbf{X}(w,n),\mathbf{a},\mathbf{b}\big)\Big)\bigg) \\
&\hspace{4cm}-\bar{\boldsymbol{\mathsf{1}}}_{\:J}\bigg(\mathbf{X}(w,n),\mathbf{a}_{\:\text{-}t}\Big(\hat{m}_{\:t,\:J}\big(\mathbf{X}(w,n),\mathbf{a},\mathbf{b}\big)\Big),\mathbf{b}\bigg)\Bigg)\:\Bigg| 
\end{align*}
\begin{align*}
&\leq\bigg|\:\bar{\boldsymbol{\mathsf{1}}}\big(\mathbf{X}(w,n),\mathbf{a},\mathbf{b}\big)-\bar{\boldsymbol{\mathsf{1}}}\bigg(\mathbf{X}(w,n),\mathbf{a},\mathbf{b}_{\:\text{-}t}\Big(\hat{m}_{\:t,\:J}\big(\mathbf{X}(w,n),\mathbf{a},\mathbf{b}\big)\Big)\bigg) \\
&\hspace{4cm}-\bar{\boldsymbol{\mathsf{1}}}\bigg(\mathbf{X}(w,n),\mathbf{a}_{\:\text{-}t}\Big(\hat{m}_{\:t,\:J}\big(\mathbf{X}(w,n),\mathbf{a},\mathbf{b}\big)\Big),\mathbf{b}\bigg)\:\bigg| \\
&+\bigg|\:\bar{\boldsymbol{\mathsf{1}}}\bigg(\mathbf{X}(w,n),\mathbf{a},\mathbf{b}_{\:\text{-}t}\Big(\hat{m}_{\:t,\:J}\big(\mathbf{X}(w,n),\mathbf{a},\mathbf{b}\big)\Big)\bigg) \\
&\hspace{4cm}-\bar{\boldsymbol{\mathsf{1}}}_{\:J}\bigg(\mathbf{X}(w,n),\mathbf{a},\mathbf{b}_{\:\text{-}t}\Big(\hat{m}_{\:t,\:J}\big(\mathbf{X}(w,n),\mathbf{a},\mathbf{b}\big)\Big)\bigg)\:\bigg| \\
&+\bigg|\:\bar{\boldsymbol{\mathsf{1}}}\bigg(\mathbf{X}(w,n),\mathbf{a}_{\:\text{-}t}\Big(\hat{m}_{\:t,\:J}\big(\mathbf{X}(w,n),\mathbf{a},\mathbf{b}\big)\Big),\mathbf{b}\bigg) \\
&\hspace{4cm}-\bar{\boldsymbol{\mathsf{1}}}_{\:J}\bigg(\mathbf{X}(w,n),\mathbf{a}_{\:\text{-}t}\Big(\hat{m}_{\:t,\:J}\big(\mathbf{X}(w,n),\mathbf{a},\mathbf{b}\big)\Big),\mathbf{b}\bigg)\:\bigg| \\
&<\frac{2\epsilon}{3}+\frac{2\epsilon}{3}+\frac{2\epsilon}{3}=2\epsilon \\
&\Leftrightarrow\bigg|\:\bar{\boldsymbol{\mathsf{1}}}\bigg(\mathbf{X}(w,n),\mathbf{a},\mathbf{b}_{\:\text{-}t}\Big(\hat{m}_{\:t,\:J}\big(\mathbf{X}(w,n),\mathbf{a},\mathbf{b}\big)\Big)\bigg)-\frac{1}{2}\bar{\boldsymbol{\mathsf{1}}}\big(\mathbf{X}(w,n),\mathbf{a},\mathbf{b}\big)\:\bigg|<\epsilon.
\end{align*}

Since $\epsilon$ is arbitrary, we must have:
\begin{align*}
\lim_{J\to\infty}\lim_{|\mathbf{X}|\to\infty}\bar{\boldsymbol{\mathsf{1}}}\Big(\mathbf{X},\mathbf{a},\mathbf{b}_{\:\text{-}t}\big(\hat{m}_{\:t,\:J}(\mathbf{X},\mathbf{a},\mathbf{b})\big)\Big)\overset{a.s.}{=}\frac{1}{2}\lim_{J\to\infty}\lim_{|\mathbf{X}|\to\infty}\bar{\boldsymbol{\mathsf{1}}}\big(\mathbf{X},\mathbf{a},\mathbf{b})\big)\:[P].
\end{align*}

Similarly, we can also prove:
\begin{align*}
\lim_{J\to\infty}\lim_{|\mathbf{X}|\to\infty}\bar{\boldsymbol{\mathsf{1}}}\Big(\mathbf{X},\mathbf{a}_{\:\text{-}t}\big(\hat{m}_{\:t,\:J}(\mathbf{X},\mathbf{a},\mathbf{b}),\mathbf{b}\big)\Big)\overset{a.s.}{=}\frac{1}{2}\lim_{J\to\infty}\lim_{|\mathbf{X}|\to\infty}\bar{\boldsymbol{\mathsf{1}}}\big(\mathbf{X},\mathbf{a},\mathbf{b})\big)\:[P].
\end{align*}
\end{proof}

\section{Implementing the proposed method of construction of a balanced $k$-d tree in practice}\label{ap:construct}

\subsection{Algorithmic approach}\label{sec:eff}

To construct a balanced $k$-d tree for distributed data as described in section \ref{sec:algorithm}, the set of statistics $\{\{\bar{\boldsymbol{\mathsf{C}}}_{\:\boldsymbol{j}}(\mathbf{X})\}\}_{\boldsymbol{j}\in\mathbb{N}_{\:J\:,p}}$ can be computed in a single step, using a programming model such as map-reduce (fully detailed below). Then, equations \ref{kd:eq:a} and \ref{kd:eq:b} can be used recursively to compute $\hat{m}_{d,k}^J(\mathbf{X})$ for $1\leq d\leq D,1\leq k\leq 2^{d-1}$, resulting in an approximately balanced $k$-d tree. Because $\boldsymbol{\mathsf{C}}_{\:\boldsymbol{j}}(\mathbf{X})=\sum_{i\in I}\boldsymbol{\mathsf{c}}_{\:\boldsymbol{j}}(\mathbf{x}_{\:i})$, where $\boldsymbol{\mathsf{c}}_{\:\boldsymbol{j}}(\mathbf{x}_{\:i})=\prod_{l=1}^p\mathsf{c}^{\:j_{\:l}}(x_{\:i_{\:l}})$, this implies the computation of $2Jp$ trigonometric terms for each $\mathbf{x}_{\:i}$. 

Instead, let us consider $K$ parameters $J_{\:1},J_{\:2},\dots,J_{\:K}$, such that $J=\prod_{k=1}^KJ_{\:k}$. Denote $\underline{\boldsymbol{J}}=\big(J_{\:1},J_{\:2},\dots,J_{\:K}\big)$, and define index sets
\begin{align*}
&\mathbb{N}_{\:\underline{\boldsymbol{J}}}=\{0\}\cup\big(\{1,\dots,2J_{\:1}\}\times\{1,\dots,J_{\:2}\}\times\dots\times\{1,\dots,J_{\:K}\}\big), \\
&\mathbb{N}_{\:\underline{\boldsymbol{J}},\:p}=\mathbb{N}_{\:\underline{\boldsymbol{J}}}\times\dots\times\mathbb{N}_{\:\underline{\boldsymbol{J}}} (p\text{ times }).
\end{align*}
Note that a general element $\underline{\underline{\boldsymbol{j}}}\in\mathbb{N}_{\:\underline{\boldsymbol{J}},\:p}$ is a $p$-tuple $\big(\underline{\boldsymbol{j}}_{\:1},\dots,\underline{\boldsymbol{j}}_{\:p}\big)$, where for $1\leq l\leq p$, the entry $\underline{\boldsymbol{j}}_{\:l}$ is either $0$, or a $K$-length vector $\big(j_{\:l,\:1},\dots,j_{\:l,\:K}\big)$, where $1\leq j_{\:l,\:1}\leq 2J_{\:1}$ and $1\leq j_{\:l,\:k}\leq J_{\:k}$ for $2\leq k\leq K$. With these notations in place, we define the following functions:
\begin{align*}
&\mathsf{c}^{\:2j-1}(z)=\cos^{2j-1}(z),\quad \mathsf{c}^{\:2j}(z)=\sin(z)\cdot\cos^{2j-2}(z) \\
&\hspace{5cm}\text{ for }j\in\mathbb{N}^+\text{ and }z\in(0,1); \\
&\dot{\mathsf{c}}^{\:j}(z,J')=\cos^{j-1}(2J'z) \\
&\hspace{5cm}\text{ for }j\in\mathbb{N}^+,z\in(0,1)\text{ and }J'\in\mathbb{N}^+; \\
&\tilde{\boldsymbol{\mathsf{c}}}^{\:\underline{\underline{\boldsymbol{j}}}}\big(\mathbf{x},\underline{\boldsymbol{J}}\big)=\prod_{l=1}^p\Big(\mathsf{c}^{\:j_{\:l,\:1}}(x_{\:l})\cdot \prod_{k=2}^K\dot{\mathsf{c}}^{\:j_{\:l,\:k}}(x_{\:l},L_{\:k-1})\Big)^{\mathsf{1}\big(\underline{\boldsymbol{j}}_{\:l}\neq 0\big)} \\
&\hspace{5cm}\text{ for }\mathbf{x}\in(\mathbf{0},\mathbf{1}),\underline{\underline{\boldsymbol{j}}}\in\mathbb{N}_{\:\underline{\boldsymbol{J}},\:p}\text{ and }\underline{\boldsymbol{J}}\in{\mathbb{N}^+}^K \\
&\text{ where }L_{\:k}=\prod_{k'=1}^kJ_{\:k}\text{ for }1\leq k<K,\text{ here }\underline{\boldsymbol{J}}=\big(J_{\:1},J_{\:2},\dots,J_{\:K}\big).
\end{align*}
Given $\underline{\boldsymbol{J}}\in{\mathbb{N}^+}^K$, we define the statistic: $\tilde{\boldsymbol{\mathsf{C}}}^{\:\underline{\underline{\boldsymbol{j}}}}\big(\mathbf{X},\underline{\boldsymbol{J}}\big)=\sum_{i\in I}\tilde{\boldsymbol{\mathsf{c}}}^{\:\underline{\underline{\boldsymbol{j}}}}\big(\mathbf{x}_{\:i},\underline{\boldsymbol{J}}\big)$ and its standardized version $\bar{\tilde{\boldsymbol{\mathsf{C}}}}^{\:\underline{\underline{\boldsymbol{j}}}}\big(\mathbf{X},\underline{\boldsymbol{J}}\big)=\big(\nicefrac{1}{|\mathbf{X}|}\big)\cdot\tilde{\boldsymbol{\mathsf{C}}}^{\:\underline{\underline{\boldsymbol{j}}}}\big(\mathbf{X},\underline{\boldsymbol{J}}\big)$ for $\underline{\underline{\boldsymbol{j}}}\in\mathbb{N}_{\:\underline{\boldsymbol{J}},\:p}$. Finally, define the sets of statistics
\begin{align*}
&\bar{\boldsymbol{\mathsf{C}}}_{{\:(0:2J)}^{\:p}}(\mathbf{X})=\{\{\bar{\boldsymbol{\mathsf{C}}}_{\:\boldsymbol{j}}(\mathbf{X})\}\}_{\:\boldsymbol{j}\in\mathbb{N}_{\:J;\:p}}, \\
&\bar{\tilde{\boldsymbol{\mathsf{C}}}}^{{\:0;\:(1:2J_{\:1}),\:(1:J_{\:2}),\:\dots,\:(1:J_{\:K})}^{\:p}}(\mathbf{X},\underline{\boldsymbol{J}})=\{\{\bar{\tilde{\boldsymbol{\mathsf{C}}}}^{\:\underline{\underline{\boldsymbol{j}}}}(\mathbf{X},\boldsymbol{J})\}\}_{\:\underline{\underline{\boldsymbol{j}}}\in\mathbb{N}_{\:\underline{\boldsymbol{J}};\:p}}.
\end{align*}

We then have the following theorem, which asserts the existence of a transformation recovering our original basic construction.

\begin{theorem}\label{chev:thm:main}
For $K\in\mathbb{N}^+$ and $\underline{\boldsymbol{J}}\in{\mathbb{N}^+}^K$, there exists a linear transformation $\mathcal{T}^{(K,p)}_{\:\underline{\boldsymbol{J}}}$ such that
\begin{align*}
\bar{\boldsymbol{\mathsf{C}}}_{{\:(0:2J)}^{\:p}}(X)=\mathcal{T}^{(K,p)}_{\:\underline{\boldsymbol{J}}}\Big(\bar{\tilde{\boldsymbol{\mathsf{C}}}}^{{\:0;\:(1:2J_{\:1}),\:(1:J_{\:2}),\:\dots,\:(1:J_{\:K})}^{\:p}}(X,\underline{\boldsymbol{J}})\Big).
\end{align*}
\end{theorem}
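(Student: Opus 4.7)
The plan is to reduce the identity to a one-dimensional change-of-basis statement, tensor it up to $p$ coordinates, and transport it to the empirical statistics. Both $\boldsymbol{\mathsf{c}}_{\:\boldsymbol{j}}(\mathbf{x})$ and $\tilde{\boldsymbol{\mathsf{c}}}^{\:\underline{\underline{\boldsymbol{j}}}}(\mathbf{x},\underline{\boldsymbol{J}})$ are products over $l=1,\dots,p$, with the $l$th factor depending only on $x_{\:l}$ and on the $l$th block of the multi-index (and equal to $1$ whenever that block is zero); the statistics $\bar{\boldsymbol{\mathsf{C}}}$ and $\bar{\tilde{\boldsymbol{\mathsf{C}}}}$ are then obtained from these functions by summing over $i\in I$ and dividing by $|\mathbf{X}|$. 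Thus $\mathcal{T}^{(K,p)}_{\:\underline{\boldsymbol{J}}}$ can be realized as the $p$-fold tensor product of a one-dimensional linear change of basis between
\[
\mathcal{B}_{\:J}=\{1\}\cup\{\cos((2j-1)x),\sin((2j-1)x):1\leq j\leq J\}
\]
and
\[
\mathcal{B}_{\:\underline{\boldsymbol{J}}}=\{1\}\cup\Bigl\{\mathsf{c}^{\:j_{\:1}}(x)\prod_{k=2}^K\dot{\mathsf{c}}^{\:j_{\:k}}(x,L_{\:k-1}):j_{\:1}\in\{1,\dots,2J_{\:1}\},\,j_{\:k}\in\{1,\dots,J_{\:k}\}\text{ for }k\geq 2\Bigr\}.
\]
A direct count gives $|\mathcal{B}_{\:J}|=|\mathcal{B}_{\:\underline{\boldsymbol{J}}}|=2J+1$, so it suffices to show $\mathcal{B}_{\:\underline{\boldsymbol{J}}}\subseteq\operatorname{span}\mathcal{B}_{\:J}$ and that $\mathcal{B}_{\:\underline{\boldsymbol{J}}}$ is linearly independent.

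The inclusion follows from standard power-reduction identities: $\cos^{2a-1}(x)$ is a linear combination of $\{\cos((2b-1)x):1\leq b\leq a\}$, $\sin(x)\cos^{2a-2}(x)$ is a linear combination of $\{\sin((2b-1)x):1\leq b\leq a\}$, and $\cos^{j-1}(2L_{\:k-1}x)$ is a linear combination of $\{\cos(2L_{\:k-1}mx):0\leq m\leq j-1\}$. Iterating the product-to-sum rule $\cos(\alpha)\cos(\beta)=\tfrac12[\cos(\alpha+\beta)+\cos(\alpha-\beta)]$ together with its sine variant across the $K$ factors expresses each element of $\mathcal{B}_{\:\underline{\boldsymbol{J}}}$ as a linear combination of $\cos(nx)$ or $\sin(nx)$ for odd $n$ with $|n|\leq(2J_{\:1}-1)+2(J_{\:2}-1)L_{\:1}+\dots+2(J_{\:K}-1)L_{\:K-1}=2J-1$, which lies in $\operatorname{span}\mathcal{B}_{\:J}$.

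For linear independence I would use a leading-frequency argument. Split $\mathcal{B}_{\:\underline{\boldsymbol{J}}}\setminus\{1\}$ into the cos-subfamily ($j_{\:1}$ odd) and the sin-subfamily ($j_{\:1}$ even); the first factor of each element is either a polynomial in $\cos(x)$ or $\sin(x)$ times such a polynomial, and multiplication by the higher-scale cosine factors preserves this, so the two subfamilies expand into disjoint cosine-only and sine-only subspaces. For each index $(j_{\:1},\dots,j_{\:K})$ define the leading frequency
\[
n^{*}=\bigl(2\lceil j_{\:1}/2\rceil-1\bigr)+2(j_{\:2}-1)L_{\:1}+\dots+2(j_{\:K}-1)L_{\:K-1};
\]
the product of the (strictly positive) leading coefficients of the individual factors, times the powers of $\tfrac12$ from the product-to-sum operations, contributes a nonzero coefficient at frequency $n^{*}$ and no higher frequency appears. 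Setting $d_{\:1}=\lceil j_{\:1}/2\rceil-1$ and $d_{\:k}=j_{\:k}-1$ for $k\geq 2$ rewrites $n^{*}=1+2(d_{\:1}+d_{\:2}L_{\:1}+\dots+d_{\:K}L_{\:K-1})$ as a rescaled mixed-radix encoding with digit bases $J_{\:1},\dots,J_{\:K}$, so $n^{*}$ bijects each subfamily onto $\{1,3,\dots,2J-1\}$. Ordering within each subfamily by $n^{*}$ then renders the change-of-basis matrix (into $\{\cos((2j-1)x)\}_{j=1}^J$ or $\{\sin((2j-1)x)\}_{j=1}^J$) triangular with nonzero diagonal, hence invertible. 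Combined with the dimension count this completes the 1D claim; inverting and tensoring $p$ times yields $\mathcal{T}^{(K,p)}_{\:\underline{\boldsymbol{J}}}$, which maps $\bar{\tilde{\boldsymbol{\mathsf{C}}}}^{\:\underline{\underline{\boldsymbol{j}}}}(\mathbf{X},\underline{\boldsymbol{J}})$ to $\bar{\boldsymbol{\mathsf{C}}}_{\:\boldsymbol{j}}(\mathbf{X})$ because both statistics are produced by the same linear operations (summing over $i$ and dividing by $|\mathbf{X}|$) applied to the basis functions. The main obstacle is the bookkeeping in the leading-frequency and mixed-radix step; the remaining steps are dimension counts, elementary trigonometric identities, or immediate consequences of the tensor-product structure.
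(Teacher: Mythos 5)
Your proposal is correct, and it follows the same high-level skeleton as the paper's proof: establish a one-dimensional linear change of basis, tensor it over the $p$ coordinates, and transport the identity to the statistics by noting that summation over $i\in I$ and division by $|\mathbf{X}|$ commute with any linear map. The difference lies entirely in how the one-dimensional step is handled. The paper's Lemma on the $1$-d transformation simply cites \citet[section 3.6]{Chakravorty2019} for the existence of ${\mathcal{T}^{(K)}_{\:\underline{\boldsymbol{J}}}}'$ on the non-constant coordinates and then augments it block-diagonally with the index $0$; you instead prove that step from scratch. Your argument --- power-reduction of $\cos^{2a-1}(x)$, $\sin(x)\cos^{2a-2}(x)$ and $\cos^{j-1}(2L_{\:k-1}x)$, the telescoping bound $(2J_{\:1}-1)+\sum_{k\geq 2}2(J_{\:k}-1)L_{\:k-1}=2J-1$ showing containment in $\operatorname{span}\mathcal{B}_{\:J}$, the cos/sin parity split, and the mixed-radix bijection of leading frequencies onto $\{1,3,\dots,2J-1\}$ yielding a triangular matrix with positive diagonal --- is sound (I checked the telescoping identity and the bijectivity of the digit encoding), and it establishes invertibility explicitly, which the statement of the theorem needs since the required map is the \emph{inverse} of the expansion you construct. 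What your route buys is a self-contained proof with an explicit, computable description of the change-of-basis matrix and a direct verification that it is nonsingular; what the paper's route buys is brevity, at the cost of delegating all of the mathematical content of the $1$-d lemma to an external reference. The remaining steps (tensoring via $\otimes_{l=1}^p$ and averaging over the data) are identical in both treatments.
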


\begin{proof}
See Appendix \ref{pf:chev:thm:main}.
\end{proof}

For any arbitrary partition $\big\{\mathbf{X}_{\:1},\dots,\mathbf{X}_{\:R}\big\}$ of the data $\mathbf{X}$ into $R$ subsets (with $R\in\mathbb{N}^+$ and $1\leq R\leq n$), once again we have the relation  $\tilde{\boldsymbol{\mathsf{C}}}^{\:\underline{\underline{\boldsymbol{j}}}}\big(\mathbf{X},\underline{\boldsymbol{J}}\big)=\sum_{r=1}^R\tilde{\boldsymbol{\mathsf{C}}}^{\:\underline{\underline{\boldsymbol{j}}}}\big(\mathbf{X}_{\:r},\underline{\boldsymbol{J}}\big)$ for $\:\underline{\underline{\boldsymbol{j}}}\in\mathbb{N}_{\:\underline{\boldsymbol{J}};\:p}$. Hence for any distributed data set $\mathbf{X}$, the collection of statistics $\tilde{\boldsymbol{\mathsf{C}}}^{\:\underline{\underline{\boldsymbol{j}}}}\big(\mathbf{X},\underline{\boldsymbol{J}}\big)$ for $\underline{\underline{\boldsymbol{j}}}\in\mathbb{N}_{\:\underline{\boldsymbol{J}};\:p}$, can exactly be computed in parallel. Observe that $\boldsymbol{0}\in\mathbb{N}_{\:\underline{\boldsymbol{J}};\:p}$ and  $\tilde{\boldsymbol{\mathsf{C}}}^{\:\boldsymbol{0}}(\mathbf{X})=\sum_{i\in I}1=|\mathbf{X}|$. Then, for $\underline{\underline{\boldsymbol{j}}}\in\mathbb{N}_{\:\underline{\boldsymbol{J}};\:p}$, we have that $\bar{\tilde{\boldsymbol{\mathsf{C}}}}^{\:\underline{\underline{\boldsymbol{j}}}}\big(\mathbf{X},\underline{\boldsymbol{J}}\big)=\big(\nicefrac{1}{\mathbf{X}}\big)\cdot\tilde{\boldsymbol{\mathsf{C}}}^{\:\underline{\underline{\boldsymbol{j}}}}\big(\mathbf{X},\underline{\boldsymbol{J}}\big)=\tilde{\boldsymbol{\mathsf{C}}}^{\:\underline{\underline{\boldsymbol{j}}}}\big(\mathbf{X},\underline{\boldsymbol{J}}\big)/\tilde{\boldsymbol{\mathsf{C}}}^{\:\boldsymbol{0}}\big(\mathbf{X},\underline{\boldsymbol{J}}\big)$. Thus, the collection of standardized statistics $\bar{\tilde{\boldsymbol{\mathsf{C}}}}^{\:\underline{\underline{\boldsymbol{j}}}}\big(\mathbf{X},\underline{\boldsymbol{J}}\big)$ for $\underline{\underline{\boldsymbol{j}}}\in\mathbb{N}_{\:\underline{\boldsymbol{J}};\:p}$ also can be exactly computed in parallel.

As a result, it is possible to reduce substantially the number of trigonometric terms needed per data point: First, the collection of statistics $\bar{\tilde{\boldsymbol{\mathsf{C}}}}^{\:\underline{\underline{\boldsymbol{j}}}}\big(\mathbf{X},\underline{\boldsymbol{J}}\big)$ for $\underline{\underline{\boldsymbol{j}}}\in\mathbb{N}_{\:\underline{\boldsymbol{J}};\:p}$ is computed in a single step. Then, the transformation $\mathcal{T}^{(K,p)}_{\:\underline{\boldsymbol{J}}}$ from Theorem \ref{chev:thm:main} is used to compute $\bar{\boldsymbol{\mathsf{C}}}_{\:\boldsymbol{j}}(\mathbf{X})$ for $\boldsymbol{j}\in\mathbb{N}_{\:J\:,p}$. Then, as before, equations \ref{kd:eq:a} and \ref{kd:eq:b} are used recursively to compute $\hat{m}_{\:d,\:k,\:J}(\mathbf{X})$ for $1\leq d\leq D,1\leq k\leq 2^{d-1}$, to construct the entire approximate $k$-d tree. 

While $(2J+1)^p$ statistics must still be computed, only $p\cdot K$ trigonometric terms are needed per data point (using the relation $\sin(x)=\sqrt{1-\cos^2(x)}$). Hence when $K<<J$, as supported by the simulation results of section  \ref{sec:performance}, this yields a substantial reduction in computational overhead, replacing serial for-loop trigonometric computations with serial for-loop multiplication by constant terms.

\subsection{Map-reduce implementation}\label{sec:MapReduce}

We now describe how to compute the set of statistics necessary for $k$-d tree construction; i.e.,  $\tilde{\boldsymbol{\mathsf{C}}}^{\:\underline{\underline{\boldsymbol{j}}}}(\mathbf{X},\underline{\boldsymbol{J}})$ for $\underline{\underline{\boldsymbol{j}}}\in\mathbb{N}_{\:\underline{\boldsymbol{J}};\:p}$. Suppose the user specifies a partition $\{I_{\:1},\ldots,I_{\:R}\}$ of the index set $I$, so that the data set of interest $\mathbf{X}$ is distributed according to corresponding subsets $\mathbf{X}_{\:1},\ldots,\mathbf{X}_{\:R}$. The data are now referenced through the set of key-value pairs $\{1,\mathbf{X}_{\:1}\},\ldots,\{R,\mathbf{X}_{\:R}\}$, which in turn serve as an input to the map step. Observe that $\tilde{\boldsymbol{\mathsf{C}}}^{\:\underline{\underline{\boldsymbol{j}}}}(\mathbf{X}_{\:r},\underline{\boldsymbol{J}})=\sum_{i\in I_{\:r}}\tilde{\boldsymbol{\mathsf{c}}}^{\:\underline{\underline{\boldsymbol{j}}}}(\mathbf{x}_{\:i},\underline{\boldsymbol{J}})$ for $\underline{\underline{\boldsymbol{j}}}\in\mathbb{N}_{\:\underline{\boldsymbol{J}};\:p}$. Hence, given an arbitrary key-value pair $\{r,\mathbf{X}_{\:r}\}$, the terms $\tilde{\boldsymbol{\mathsf{c}}}^{\:\underline{\underline{\boldsymbol{j}}}}(\mathbf{x}_{\:i},\underline{\boldsymbol{J}})$ for $\underline{\underline{\boldsymbol{j}}}\in\mathbb{N}_{\:\underline{\boldsymbol{J}};\:p}$ will be computed, and then summed over $i\in I_{\:r}$, to obtain $\tilde{\boldsymbol{\mathsf{C}}}^{\:\underline{\underline{\boldsymbol{j}}}}(\mathbf{X}_{\:r},\underline{\boldsymbol{J}})$. Thus, for the input key-value pair $\{r,\mathbf{X}_{\:r}\}$, the corresponding map step output is the set of intermediate key-value pairs  $\{\underline{\underline{\boldsymbol{j}}},\tilde{\boldsymbol{\mathsf{C}}}^{\:\underline{\underline{\boldsymbol{j}}}}(\mathbf{X}_{\:r},\underline{\boldsymbol{J}})\}$ for $\underline{\underline{\boldsymbol{j}}}\in\mathbb{N}_{\:\underline{\boldsymbol{J}};\:p}$. 

In the reduce step, we take the reduce function to be the addition operator. So, for each key, we sum over the corresponding values from the set of all intermediate key-value pairs. Since $I$ is the disjoint union of $I_{\:1},\ldots,I_{\:R}$, the final map-reduce output is the set of key-value pairs $\{\underline{\underline{\boldsymbol{j}}},\tilde{\boldsymbol{\mathsf{C}}}^{\:\underline{\underline{\boldsymbol{j}}}}(\mathbf{X},\underline{\boldsymbol{J}})\}$ for $\underline{\underline{\boldsymbol{j}}}\in\mathbb{N}_{\:\underline{\boldsymbol{J}};\:p}$. As described by \citet[section~3.2]{Chakravorty2021}, the partition $\{I_{\:1},\ldots,I_{\:R}\}$ may also be determined implicitly through an architecture such as Spark \citep{Zaharia2016}. In this case, $X$ is treated as a resilient distributed data set (RDD), via a function $\lambda(\mathbf{x})=\{\{\tilde{\boldsymbol{\mathsf{c}}}^{\:\underline{\underline{\boldsymbol{j}}}}(\mathbf{x},\underline{\boldsymbol{J}})\}\}_{\:\underline{\underline{\boldsymbol{j}}}\in\mathbb{N}_{\:\underline{\boldsymbol{J}};\:p}}$ used to transform it to an intermediate RDD by way of a flat-map transformation. This intermediate RDD is then further transformed, once again taking the reduce function to be the addition operator.

After implementing the overall map-reduce step described above, the resulting statistics $\tilde{\boldsymbol{\mathsf{C}}}^{\:\underline{\underline{\boldsymbol{j}}}}(\mathbf{X},\underline{\boldsymbol{J}})$ for $\underline{\underline{\boldsymbol{j}}}\in\mathbb{N}_{\:\underline{\boldsymbol{J}};\:p}$ may now be used to compute $\hat{m}_{\:d,\:k}^J(\mathbf{X})$ for $1\leq d\leq D,1\leq k\leq 2^{d-1}$, enabling construction of the $k$-d tree.

\subsection{Proof of Theorem \ref{chev:thm:main}}\label{pf:chev:thm:main}

To establish the existence of the linear transformation asserted by Theorem \ref{chev:thm:main}, we require some preliminary results. To this end, let us first define the quantities
\begin{align*}
&\boldsymbol{\mathsf{c}}_{\:(1:2J)}(z)=\big(c_{\:1}(z),\dots,c_{\:2J}(z)\big); \\
&\boldsymbol{\mathsf{c}}_{\:(0:2J)}(z)=\big(c_{\:0}(z),c_{\:1}(z),\dots,c_{\:2J}(z)\big)=\big(1,\boldsymbol{\mathsf{c}}_{\:(1:2J)}(z)\big).
\end{align*}
Here $\boldsymbol{\mathsf{c}}_{\:(0:2J)}(z)$ is indexed by the set $\mathbb{N}_{\:J}=\{0,1,\dots,2J\}$. Now, for $\underline{\boldsymbol{J}}=(J_{\:1},\dots,J_{\:K})$, define the index set
\begin{align*}
\mathbb{N}_{\:\{\underline{\boldsymbol{J}}\}}=\{1,\dots,2J_{\:1}\}\times\{1,\dots,J_{\:2}\}\times\dots\times\{1,\dots,J_{\:K}\}.
\end{align*}
We then add the element $0$ to obtain the index set
\begin{align*}
\mathbb{N}_{\:\underline{\boldsymbol{J}}}=\{0\}\cup\big(\{1,\dots,2J_{\:1}\}\times\{1,\dots,J_{\:2}\}\times\dots\times\{1,\dots,J_{\:K}\}\big).
\end{align*}
Also, for the new index element $0$, define: $\tilde{c}^{\:0}(z,\underline{\boldsymbol{J}})=1$. Now we define:
\begin{align*}
\tilde{\boldsymbol{\mathsf{c}}}^{\:0;\:(1:2J_{\:1}),\:(1:J_{\:2}),\:\dots\:,\:(1:J_{\:K})}(z,\underline{\boldsymbol{J}})=\{\{\tilde{\mathsf{c}}^{\:\underline{\boldsymbol{j}}}(z,\underline{\boldsymbol{J}})\}\}_{\:\big\{\underline{\boldsymbol{j}}\in\mathbb{N}_{\:\underline{\boldsymbol{J}}}\big\}}.
\end{align*}

Then, we have the following lemma.

\begin{lemma}\label{chev:lem:gen:1d}
For $K\in\mathbb{N}^+$ and $\underline{\boldsymbol{J}}\in{\mathbb{N}^+}^K$, there exists a linear transformation $\mathcal{T}^{(K)}_{\:\underline{\boldsymbol{J}}}$ such that:
\begin{align*}
\boldsymbol{\mathsf{c}}_{\:(0:2J)}(z)=\mathcal{T}^{(K)}_{\:\underline{\boldsymbol{J}}}\big(\tilde{\boldsymbol{\mathsf{c}}}^{\:0;\:(1:2J_{\:1}),\:(1:J_{\:2}),\:\dots,\:(1:J_{\:K})}(z,\underline{\boldsymbol{J}})\big).
\end{align*}
\end{lemma}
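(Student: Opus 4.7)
The plan is to prove this by a dimension-counting argument combined with an analysis of the top-frequency Fourier content of each element of the proposed basis. First I would observe that both sides live in the $(2J+1)$-dimensional space $V=\mathrm{span}\{1,\cos((2m-1)z),\sin((2m-1)z):1\le m\le J\}$: the target $\boldsymbol{\mathsf{c}}_{\:(0:2J)}(z)$ contains $2J+1$ entries by definition, and the candidate vector $\tilde{\boldsymbol{\mathsf{c}}}^{\:0;\:(1:2J_{\:1}),\:(1:J_{\:2}),\:\dots,\:(1:J_{\:K})}(z,\underline{\boldsymbol{J}})$ also contains $1+2J_{\:1}J_{\:2}\cdots J_{\:K}=2J+1$ entries. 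Each entry of the latter lies in $V$: Chebyshev polynomials of the first and second kind expand $\mathsf{c}^{\:j_{\:1}}(z)$ into $\cos((2r-1)z)$ and $\sin((2r-1)z)$ for $r\le J_{\:1}$; each factor $\cos^{\:j_{\:k}-1}(2L_{\:k-1}z)$ is a polynomial in $\cos(2L_{\:k-1}z)$ and hence a combination of $\cos(2L_{\:k-1}\ell z)$ for $\ell\le J_{\:k}-1$; and iterated product-to-sum produces only odd-frequency sines and cosines whose frequencies are bounded by $2\lceil j_{\:1}/2\rceil-1+2\sum_{k=2}^{K}L_{\:k-1}(j_{\:k}-1)\le 2J-1$ after a telescoping estimate using $L_{\:k}=J_{\:1}\cdots J_{\:k}$.

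The essential step is to establish linear independence of the $2J+1$ entries of the new basis in $V$. To this end I would compute the top Fourier frequency of $\tilde{\mathsf{c}}^{\:\underline{\boldsymbol{j}}}(z)$ for each $\underline{\boldsymbol{j}}=(j_{\:1},\dots,j_{\:K})$: using the leading terms of the Chebyshev expansions together with the iterated product-to-sum formula, this top frequency equals $2m_{\:\underline{\boldsymbol{j}}}-1$, where $m_{\:\underline{\boldsymbol{j}}}=\lceil j_{\:1}/2\rceil+\sum_{k=2}^{K}L_{\:k-1}(j_{\:k}-1)$; moreover the coefficient of $\cos((2m_{\:\underline{\boldsymbol{j}}}-1)z)$ is nonzero precisely when $j_{\:1}$ is odd, and that of $\sin((2m_{\:\underline{\boldsymbol{j}}}-1)z)$ is nonzero precisely when $j_{\:1}$ is even. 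A mixed-radix uniqueness argument applied to the identity $m_{\:\underline{\boldsymbol{j}}}-1=(\lceil j_{\:1}/2\rceil-1)+\sum_{k=2}^{K}L_{\:k-1}(j_{\:k}-1)$, whose digits lie in $\{0,\dots,J_{\:1}-1\}$ and $\{0,\dots,J_{\:k}-1\}$ respectively, then shows that each $m\in\{1,\dots,J\}$ is realized by exactly two indices $\underline{\boldsymbol{j}}$, one with $j_{\:1}$ odd and one with $j_{\:1}$ even.

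With both bases ordered in parallel (the new basis by ascending $m_{\:\underline{\boldsymbol{j}}}$, with $j_{\:1}$ odd before $j_{\:1}$ even, and the standard basis as $1,\cos z,\sin z,\cos 3z,\sin 3z,\dots$, placing the constant element $\tilde{\mathsf{c}}^{\:0}=1$ first), the change-of-basis matrix from the new basis to $\boldsymbol{\mathsf{c}}_{\:(0:2J)}(z)$ is block upper-triangular with $2\times 2$ diagonal blocks that are nonzero scalar multiples of the identity, hence invertible; the desired $\mathcal{T}^{(K)}_{\:\underline{\boldsymbol{J}}}$ is simply its inverse. The main obstacle I expect is the leading-coefficient bookkeeping in the frequency analysis: one must verify carefully that the top-frequency term of the iterated product-to-sum of $\cos^{\:2n-1}(z)\prod_{k\ge 2}\cos^{\:j_{\:k}-1}(2L_{\:k-1}z)$ (and its $\sin$-analog) really does survive with nonzero coefficient, which reduces to noting that among the $2^{K}$ sign choices $\epsilon\in\{\pm 1\}^{K}$ in the iterated identity, only the two extreme choices contribute to the maximal frequency and their contributions reinforce rather than cancel.
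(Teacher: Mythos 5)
Your proposal is correct, but it takes a genuinely different route from the paper. The paper's own proof is essentially a citation: it invokes a pre-existing linear transformation ${\mathcal{T}^{(K)}_{\:\underline{\boldsymbol{J}}}}'$ from \citet[section 3.6]{Chakravorty2019} acting on the $2J$ non-constant coordinates, and then extends it to the full $(2J+1)$-dimensional statement by vectorizing and forming the block-diagonal matrix $\mathrm{diag}\big(1,{\mathcal{A}^{(K)}_{\:\underline{\boldsymbol{J}}}}'\big)$ to absorb the constant entry $\tilde{\mathsf{c}}^{\:0}=1$. You instead give a self-contained argument, and the substance of your proof is what the cited thesis result must contain: you verify that all $2J+1$ entries of $\tilde{\boldsymbol{\mathsf{c}}}^{\:0;\:(1:2J_{\:1}),\dots,(1:J_{\:K})}(z,\underline{\boldsymbol{J}})$ lie in the $(2J+1)$-dimensional span of $\{1,\cos((2m-1)z),\sin((2m-1)z)\colon 1\le m\le J\}$, match cardinalities, and prove linear independence via the top-frequency map $\underline{\boldsymbol{j}}\mapsto m_{\:\underline{\boldsymbol{j}}}$ and the mixed-radix bijection with radices $J_{\:1},\dots,J_{\:K}$ and place values $1,L_{\:1},\dots,L_{\:K-1}$. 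The frequency bookkeeping checks out: the odd/even parity of $j_{\:1}$ cleanly separates cosine-type from sine-type expansions (so the $2\times 2$ diagonal blocks are in fact diagonal, though with generally unequal nonzero entries rather than scalar multiples of the identity as you wrote --- immaterial for invertibility), and the maximal frequency $2m_{\:\underline{\boldsymbol{j}}}-1$ survives with nonzero coefficient because it is attained only by the extreme sign choices in the iterated product-to-sum expansion, which cannot cancel. What your approach buys is a proof that stands on its own and makes explicit \emph{why} the compressed statistics carry the same information (a triangular change of basis ordered by Fourier frequency); what the paper's approach buys is brevity, at the cost of deferring the real content to an external reference.
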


\begin{proof}
From \citet[section 3.6]{Chakravorty2019}, we have a general linear-transformation ${\mathcal{T}^{(K)}_{\:\underline{\boldsymbol{J}}}}':\mathbb{R}^{2J_{\:1}}\times\dots\times\mathbb{R}^{J_{\:K}}\to\mathbb{R}^{2J}$ for $K\in\mathbb{N}^+$ and $\boldsymbol{J}\in{\mathbb{N}^+}^K$ satisfying:
\begin{align*}
\quad\boldsymbol{\mathsf{c}}_{\:(1:2J)}(z)={\mathcal{T}^{(K)}_{\:\underline{\boldsymbol{J}}}}'\big(\boldsymbol{\mathsf{c}}^{\:(1:2J_{\:1}),\:\dots,\:(1:J_{\:K})}(z,\boldsymbol{J})\big).
\end{align*}
Suppose $\mathcal{V}$ is the generic invertible transformation that vectorizes a multi-dimensional array in the natural-order of the index-set of the array. Then, we will be able to find a $2J\times 2J$ matrix ${\mathcal{A}^{(K)}_{\:\underline{\boldsymbol{J}}}}'$, such that:
\begin{align*}
\quad\boldsymbol{\mathsf{c}}_{\:(1:2J)}(z)={\mathcal{A}^{(K)}_{\:\underline{\boldsymbol{J}}}}'\cdot\mathcal{V}\big(\boldsymbol{\mathsf{c}}^{\:(1:2J_{\:1}),\:\dots,\:(1:J_{\:K})}(z,\boldsymbol{J})\big).
\end{align*}

Now observe that
\begin{align*}
\mathcal{V}\big(\boldsymbol{\mathsf{c}}^{\:0;\:(1:2J_{\:1}),\:\dots,\:(1:J_{\:K})}(z,\boldsymbol{J})\big)=\Big(1\ ,\mathcal{V}\big(\boldsymbol{\mathsf{c}}^{\:(1:2J_{\:1}),\:\dots,\:(1:J_{\:K})}(z,\boldsymbol{J})\big)\Big)
\end{align*}
Take $\mathcal{A}^{(K)}_{\:\underline{\boldsymbol{J}}}:=\text{diag}\big(1,{\mathcal{A}^{(K)}_{\:\underline{\boldsymbol{J}}}}'\big)$; since $\boldsymbol{\mathsf{c}}_{\:(0:2J)}(z)=\big(1,\boldsymbol{\mathsf{c}}_{\:(1:2J)}(z)\big)$, we must have
\begin{align*}
\quad\boldsymbol{\mathsf{c}}_{\:(0:2J)}(z)=\mathcal{A}^{(K)}_{\:\underline{\boldsymbol{J}}}\cdot\mathcal{V}\big(\boldsymbol{\mathsf{c}}^{\:0;\:(1:2J_{\:1}),\:\dots,\:(1:J_{\:K})}(z,\boldsymbol{J})\big).
\end{align*}
So, we have proved the existence of a transformation $\mathcal{T}^{(K)}_{\:\underline{\boldsymbol{J}}}$, such that
\begin{align*}
\quad\boldsymbol{\mathsf{c}}_{\:(0:2J)}(z)=\mathcal{T}^{(K)}_{\:\underline{\boldsymbol{J}}}\big(\boldsymbol{\mathsf{c}}^{\:0;\:(1:2J_{\:1}),\:\dots,\:(1:J_{\:K})}(z,\boldsymbol{J})\big).
\end{align*}
\end{proof}

Now consider a $p$-dimensional vector $\mathbf{x}\in(\mathbf{0},\mathbf{1})$, so that if $\mathbf{x}=\big(x_{\:1},\dots,x_{\:p}\big)$, then $x_{\:l}\in(0,1)$ for $1\leq l\leq p$. Let us define the $p$-dimensional function:
\begin{align*}
\boldsymbol{\mathsf{c}}_{\:(0:2J);\:p}(\mathbf{x})=\otimes_{l=1}^p\boldsymbol{\mathsf{c}}_{\:(0:2J)}(x_{\:l}).
\end{align*}
Note that $\boldsymbol{\mathsf{c}}_{\:(0:2J);\:p}(\mathbf{x})$ can be identified as a $p$-dimensional array with dimensions $(2J+1)\times\dots\times(2J+1)$ ($p$ times). This $p$-dimensional array is indexed by the index set
\begin{align*}
\mathbb{N}_{\:J;\:p}=\{0,\dots,2J\}\times\dots\times\{0,\dots,2J\}\:(p\text{ times}).
\end{align*}
For $\boldsymbol{j}=\big(j_{\:1},\dots,j_{\:p}\big)\in\mathbb{N}_{\:J;\:p}$, a general $\boldsymbol{j}$th element of $\boldsymbol{\mathsf{c}}_{\:(0:2J);\:p}(\mathbf{x})$ is the function:
\begin{align*}
\boldsymbol{\mathsf{c}}_{\:\boldsymbol{j}}(\mathbf{x})=\prod_{l=1}^p\mathsf{c}_{\:j_{\:l}}(x_{\:l}).
\end{align*}

Let us also define:
\begin{align*}
\tilde{\boldsymbol{\mathsf{c}}}^{{\:0;\:(1:2J_{\:1}),\:(1:J_{\:2}),\:\dots,\:(1:J_{\:K})}^{\:p}}\big(\mathbf{x},\underline{\boldsymbol{J}}\big)=\otimes_{l=1}^p\Big(\tilde{\boldsymbol{\mathsf{c}}}^{\:0;\:(1:2J_{\:1}),\:(1:J_{\:2}),\:\dots,\:(1:J_{\:K})}\big(x_{\:l},\underline{\boldsymbol{J}}\big)\Big).
\end{align*}
This array-like function is indexed by the index set $\mathbb{N}_{\:\underline{\boldsymbol{J}};\:p}$, defined as:
\begin{align*}
\mathbb{N}_{\:\underline{\boldsymbol{J}};\:p}=\mathbb{N}_{\:\underline{\boldsymbol{J}}}\times\dots\times\mathbb{N}_{\:\underline{\boldsymbol{J}}}\:(p\text{ times}).
\end{align*}
Let us denote a general index element of $\mathbb{N}_{\:\{\underline{\boldsymbol{J}},\:p\}}$ as $\underline{\underline{\boldsymbol{j}}}$, which is a $p$-tuple $\big(\underline{\boldsymbol{j}}_{\:1},\dots,\underline{\boldsymbol{j}}_{\:p}\big)$. For $1\leq l\leq p$, $\underline{\boldsymbol{j}}_{\:l}$ is either $0$ or a $K$-element-vector $\big(j_{\:l,\:1},\dots,j_{\:l,\:K}\big)$.

Now, the $\underline{\underline{\boldsymbol{j}}}$th element of $\tilde{\boldsymbol{\mathsf{c}}}^{{\:0,\:(1:2J_{\:1}),\:(1:J_{\:2}),\:\dots,\:(1:J_{\:K})}^{\:p}}\big(\mathbf{x},\underline{\boldsymbol{J}}\big)$ is:
\begin{align*}
\tilde{\mathsf{c}}^{\:\underline{\underline{\boldsymbol{j}}}}\big(\mathbf{x},\underline{\boldsymbol{J}}\big)=\prod_{l=1}^p\Big(\mathsf{c}^{\:j_{\:l,\:1}}(x_{\:l})\cdot \prod_{k=2}^K\dot{\mathsf{c}}^{\:j_{\:l,\:k}}(x_{\:l},L_{\:k-1})\Big)^{\mathsf{1}\big(\underline{\boldsymbol{j}}_{\:l}\neq 0\big)}.
\end{align*}

We have the following lemma:
\begin{lemma}\label{chev:lem:gen:pd}
Let $\mathbf{x}$ be an element of the $p$-dimensional open interval $(\mathbf{0},\mathbf{1})$. For $K\in\mathbb{N}^+$ and $\underline{\boldsymbol{J}}\in{\mathbb{N}^+}^K$, there exists a linear transformation $\mathcal{T}^{(K,p)}_{\:\underline{\boldsymbol{J}}}$, such that:
\begin{align*}
\boldsymbol{\mathsf{c}}_{{\:(0:2J)}^{\:p}}(\mathbf{x})=\mathcal{T}^{(K,p)}_{\:\underline{\boldsymbol{J}}}\big(\tilde{\boldsymbol{\mathsf{c}}}^{{\:0;\:(1:2J_{\:1}),\:(1:J_{\:2}),\:\dots,\:(1:J_{\:K})}^{\:p}}(\mathbf{x},\underline{\boldsymbol{J}})\big).
\end{align*}
\end{lemma}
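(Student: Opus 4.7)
The plan is to lift the one-dimensional transformation provided by Lemma \ref{chev:lem:gen:1d} to the $p$-dimensional setting by taking its $p$-fold tensor product. The structural reason this works is that both sides of the asserted identity are defined as $p$-fold tensor (Kronecker) products of their one-dimensional analogues: by definition we have $\boldsymbol{\mathsf{c}}_{\:(0:2J);\:p}(\mathbf{x})=\otimes_{l=1}^p\boldsymbol{\mathsf{c}}_{\:(0:2J)}(x_{\:l})$ and $\tilde{\boldsymbol{\mathsf{c}}}^{{\:0;\:(1:2J_{\:1}),\:\dots,\:(1:J_{\:K})}^{\:p}}(\mathbf{x},\underline{\boldsymbol{J}})=\otimes_{l=1}^p\tilde{\boldsymbol{\mathsf{c}}}^{\:0;\:(1:2J_{\:1}),\:\dots,\:(1:J_{\:K})}(x_{\:l},\underline{\boldsymbol{J}})$, so an entry-by-entry equality is available coordinatewise.

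First, I would invoke Lemma \ref{chev:lem:gen:1d} to fix, for each $l\in\{1,\dots,p\}$, a linear map $\mathcal{T}^{(K)}_{\:\underline{\boldsymbol{J}}}$ (identified, after the vectorization $\mathcal{V}$ used in the proof of Lemma \ref{chev:lem:gen:1d}, with a matrix $\mathcal{A}^{(K)}_{\:\underline{\boldsymbol{J}}}$) satisfying $\boldsymbol{\mathsf{c}}_{\:(0:2J)}(x_{\:l})=\mathcal{T}^{(K)}_{\:\underline{\boldsymbol{J}}}\big(\tilde{\boldsymbol{\mathsf{c}}}^{\:0;\:(1:2J_{\:1}),\:\dots,\:(1:J_{\:K})}(x_{\:l},\underline{\boldsymbol{J}})\big)$. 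Second, I would define the candidate transformation as the $p$-fold Kronecker product $\mathcal{T}^{(K,p)}_{\:\underline{\boldsymbol{J}}}:=\mathcal{T}^{(K)}_{\:\underline{\boldsymbol{J}}}\otimes\cdots\otimes\mathcal{T}^{(K)}_{\:\underline{\boldsymbol{J}}}$ ($p$ times), viewed as a linear map on the appropriate vectorized tensor spaces. Third, I would apply the standard mixed-product property of the Kronecker product, namely $(A_{\:1}\otimes\cdots\otimes A_{\:p})(v_{\:1}\otimes\cdots\otimes v_{\:p})=(A_{\:1}v_{\:1})\otimes\cdots\otimes(A_{\:p}v_{\:p})$, which yields
\begin{align*}
\mathcal{T}^{(K,p)}_{\:\underline{\boldsymbol{J}}}\Big(\otimes_{l=1}^p\tilde{\boldsymbol{\mathsf{c}}}^{\:0;\:(1:2J_{\:1}),\:\dots,\:(1:J_{\:K})}(x_{\:l},\underline{\boldsymbol{J}})\Big)=\otimes_{l=1}^p\mathcal{T}^{(K)}_{\:\underline{\boldsymbol{J}}}\big(\tilde{\boldsymbol{\mathsf{c}}}^{\:0;\:(1:2J_{\:1}),\:\dots,\:(1:J_{\:K})}(x_{\:l},\underline{\boldsymbol{J}})\big),
\end{align*}
and this right-hand side is $\otimes_{l=1}^p\boldsymbol{\mathsf{c}}_{\:(0:2J)}(x_{\:l})=\boldsymbol{\mathsf{c}}_{\:{(0:2J)}^{\:p}}(\mathbf{x})$ by the factor-by-factor identity from Lemma \ref{chev:lem:gen:1d}.

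There is no genuinely hard step here; the only real care required is bookkeeping on indices. Specifically, the objects $\boldsymbol{\mathsf{c}}_{\:{(0:2J)}^{\:p}}(\mathbf{x})$ and $\tilde{\boldsymbol{\mathsf{c}}}^{{\:0;\:(1:2J_{\:1}),\:\dots,\:(1:J_{\:K})}^{\:p}}(\mathbf{x},\underline{\boldsymbol{J}})$ are multidimensional arrays indexed by $\mathbb{N}_{\:J;\:p}$ and $\mathbb{N}_{\:\underline{\boldsymbol{J}};\:p}$ respectively, so one must compose the tensor-product map with the natural-order vectorization $\mathcal{V}$ used in Lemma \ref{chev:lem:gen:1d} to obtain an honest matrix acting between vector spaces of dimensions $|\mathbb{N}_{\:\underline{\boldsymbol{J}};\:p}|$ and $|\mathbb{N}_{\:J;\:p}|=(2J+1)^p$. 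This is purely a matter of choosing a consistent ordering of multi-indices and does not affect the existence statement that the lemma requires. The proof therefore reduces to a direct citation of Lemma \ref{chev:lem:gen:1d} in each coordinate, combined with linearity of the Kronecker product.
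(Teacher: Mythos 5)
Your proof is correct and follows essentially the same route as the paper's: both invoke Lemma \ref{chev:lem:gen:1d} in each coordinate and define $\mathcal{T}^{(K,p)}_{\:\underline{\boldsymbol{J}}}:=\otimes_{l=1}^p\mathcal{T}^{(K)}_{\:\underline{\boldsymbol{J}}}$, with the identity following from the mixed-product property of the tensor product. Your additional remark about composing with the vectorization $\mathcal{V}$ to fix a consistent multi-index ordering is a point the paper leaves implicit, but it does not change the argument.
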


\begin{proof}
From Lemma \ref{chev:lem:gen:1d}, we have for $1\leq l\leq p$:
\begin{align*}
\boldsymbol{\mathsf{c}}_{{\:(0:2J)}}(x_{\:l})=\mathcal{T}^{(K)}_{\:\underline{\boldsymbol{J}}}\big(\tilde{\boldsymbol{\mathsf{c}}}^{\:0;\:(1:2J_{\:1}),\:(1:J_{\:2}),\:\dots,\:(1:J_{\:K})}(x_{\:l},\underline{\boldsymbol{J}})\big).
\end{align*}
If we take tensor product of both sides for $1\leq l\leq p$, we get:
\begin{align*}
&\otimes_{l=1}^p\boldsymbol{\mathsf{c}}_{\:(0:2J)}(x_{\:l})=\otimes_{l=1}^p\Big(\mathcal{T}^{(K)}_{\:\underline{\boldsymbol{J}}}\big(\tilde{\boldsymbol{\mathsf{c}}}^{\:0;\:(1:2J_{\:1}),\:(1:J_{\:2}),\:\dots,\:(1:J_{\:K})}(x_{\:l},\underline{\boldsymbol{J}})\big)\Big) \\
&\Leftrightarrow\boldsymbol{\mathsf{c}}_{{\:(0:2J)}^{\:p}}(\mathbf{x})=\big(\otimes_{l=1}^p\mathcal{T}^{(K)}_{\:\underline{\boldsymbol{J}}}\big)\Big(\otimes_{l=1}^p\big(\tilde{\boldsymbol{\mathsf{c}}}^{\:0;\:(1:2J_{\:1}),\:(1:J_{\:2}),\:\dots,\:(1:J_{\:K})}(x_{\:l},\underline{\boldsymbol{J}})\big)\Big) \\
&\Leftrightarrow\boldsymbol{\mathsf{c}}_{{\:(0:2J)}^{\:p}}(\mathbf{x})=\mathcal{T}^{(K,p)}_{\:\underline{\boldsymbol{J}}}\big(\tilde{\boldsymbol{\mathsf{c}}}^{{\:0;\:(1:2J_{\:1}),\:(1:J_{\:2}),\:\dots,\:(1:J_{\:K})}^{\:p}}(\mathbf{x},\underline{\boldsymbol{J}})\big).
\end{align*}
If we let $\mathcal{T}^{(K,p)}_{\:\underline{\boldsymbol{J}}}:=\otimes_{l=1}^p\mathcal{T}^{(K)}_{\:\underline{\boldsymbol{J}}}$, we have:
\begin{align*}
\boldsymbol{\mathsf{c}}_{{\:(0:2J)}^{\:p}}(\mathbf{x})=\mathcal{T}^{(K,p)}_{\:\underline{\boldsymbol{J}}}\big(\tilde{\boldsymbol{\mathsf{c}}}^{{\:0;\:(1:2J_{\:1}),\:(1:J_{\:2}),\:\dots,\:(1:J_{\:K})}^{\:p}}(\mathbf{x},\underline{\boldsymbol{J}})\big).
\end{align*}
\end{proof}

Now remember, we have the following for $\boldsymbol{j}\in\mathbb{N}_{\:J;\:p}$:
\begin{align*}
\boldsymbol{\mathsf{C}}_{\:\boldsymbol{j}}(\mathbf{X})=\sum_{i\in I}\boldsymbol{\mathsf{c}}_{\:\boldsymbol{j}}(\mathbf{x}_{\:i}),\quad\bar{\boldsymbol{\mathsf{C}}}_{\:\boldsymbol{j}}(\mathbf{X})=\big(\nicefrac{1}{\mathbf{X}}\big)\cdot\boldsymbol{\mathsf{C}}_{\:\boldsymbol{j}}(\mathbf{X}).
\end{align*}
Let us define the collections
\begin{align*}
\boldsymbol{\mathsf{C}}_{{\:(0:2J)}^{\:p}}(\mathbf{X})=\{\{\boldsymbol{\mathsf{C}}_{\:\boldsymbol{j}}(\mathbf{X})\}\}_{\:\boldsymbol{j}\in\mathbb{N}_{\:J;\:p}}, \quad
\bar{\boldsymbol{\mathsf{C}}}_{{\:(0:2J)}^{\:p}}(\mathbf{X})=\{\{\bar{\boldsymbol{\mathsf{C}}}_{\:\boldsymbol{j}}(\mathbf{X})\}\}_{\:\boldsymbol{j}\in\mathbb{N}_{\:J;\:p}},
\end{align*}
and note that
\begin{align*}
\boldsymbol{\mathsf{C}}_{{\:(0:2J)}^{\:p}}(\mathbf{X})=\sum_{i\in I}\boldsymbol{\mathsf{c}}_{{\:(0:2J)}^{\:p}}(\mathbf{x}_{\:i}),\quad\bar{\boldsymbol{\mathsf{C}}}_{{\:(0:2J)}^{\:p}}(\mathbf{X})=\big(\nicefrac{1}{\mathbf{X}}\big)\cdot\boldsymbol{\mathsf{C}}_{{\:(0:2J)}^{\:p}}(\mathbf{X}).
\end{align*}

We then define the following for a general $\underline{\underline{\boldsymbol{j}}}\in\mathbb{N}_{\:\underline{\boldsymbol{J}};\:p}$:
\begin{align*}
\tilde{\boldsymbol{\mathsf{C}}}^{\:\underline{\underline{\boldsymbol{j}}}}(\mathbf{X},\boldsymbol{J})=\sum_{i\in I}\tilde{\boldsymbol{\mathsf{c}}}^{\:\underline{\underline{\boldsymbol{j}}}}(\mathbf{x}_{\:i},\boldsymbol{J}),\quad\bar{\tilde{\boldsymbol{\mathsf{C}}}}^{\:\underline{\underline{\boldsymbol{j}}}}(\mathbf{X},\boldsymbol{J})=\big(\nicefrac{1}{\mathbf{X}}\big)\cdot\tilde{\boldsymbol{\mathsf{C}}}^{\:\underline{\underline{\boldsymbol{j}}}}(\mathbf{X},\boldsymbol{J}).
\end{align*}
Again, we define the collections of statistics: 
\begin{align*}
&\tilde{\boldsymbol{\mathsf{C}}}^{{\:0;\:(1:2J_{\:1}),\:(1:J_{\:2}),\:\dots,\:(1:J_{\:K})}^{\:p}}(\mathbf{X},\underline{\boldsymbol{J}})=\{\{\tilde{\boldsymbol{\mathsf{C}}}^{\:\underline{\underline{\boldsymbol{j}}}}(\mathbf{X},\boldsymbol{J})\}\}_{\:\underline{\underline{\boldsymbol{j}}}\in\mathbb{N}_{\:\underline{\boldsymbol{J}};\:p}} \\
&\bar{\tilde{\boldsymbol{\mathsf{C}}}}^{{\:0;\:(1:2J_{\:1}),\:(1:J_{\:2}),\:\dots,\:(1:J_{\:K})}^{\:p}}(\mathbf{X},\underline{\boldsymbol{J}})=\{\{\bar{\tilde{\boldsymbol{\mathsf{C}}}}^{\:\underline{\underline{\boldsymbol{j}}}}(\mathbf{X},\boldsymbol{J})\}\}_{\:\underline{\underline{\boldsymbol{j}}}\in\mathbb{N}_{\:\underline{\boldsymbol{J}};\:p}}.
\end{align*}
Furthermore, we have the identities
\begin{align*}
&\tilde{\boldsymbol{\mathsf{C}}}^{{\:0;\:(1:2J_{\:1}),\:(1:J_{\:2}),\:\dots,\:(1:J_{\:K})}^{\:p}}(\mathbf{X},\underline{\boldsymbol{J}})=\sum_{i\in I}\tilde{\boldsymbol{\mathsf{c}}}^{{\:0;\:(1:2J_{\:1}),\:(1:J_{\:2}),\:\dots,\:(1:J_{\:K})}^{\:p}}(\mathbf{x}_{\:i},\underline{\boldsymbol{J}}), \\
&\hspace{-1.2em}\bar{\tilde{\boldsymbol{\mathsf{C}}}}^{{\:0;\:(1:2J_{\:1}),\:(1:J_{\:2}),\:\dots,\:(1:J_{\:K})}^{\:p}}(\mathbf{X},\underline{\boldsymbol{J}})=\big(\nicefrac{1}{\mathbf{X}}\big)\cdot\tilde{\boldsymbol{\mathsf{C}}}^{{\:0;\:(1:2J_{\:1}),\:(1:J_{\:2}),\:\dots,\:(1:J_{\:K})}^{\:p}}(\mathbf{X},\underline{\boldsymbol{J}}).
\end{align*}

Now we are ready to prove Theorem \ref{chev:thm:main}.

\begin{proof}
From Lemma \ref{chev:lem:gen:pd}, we have for $i\in I$:
\begin{align*}
\boldsymbol{\mathsf{c}}_{{\:(0:2J)}^{\:p}}(\mathbf{x}_{\:i})=\mathcal{T}^{(K,p)}_{\:\underline{\boldsymbol{J}}}\big(\tilde{\boldsymbol{\mathsf{c}}}^{{\:0;\:(1:2J_{\:1}),\:(1:J_{\:2}),\:\dots,\:(1:J_{\:K})}^{\:p}}(\mathbf{x}_{\:i},\underline{\boldsymbol{J}})\big).
\end{align*}
Since $\mathcal{T}^{(K,p)}_{\:\underline{\boldsymbol{J}}}$ is a linear transformation, if we take a sum over $i\in I$ and divide both sides by $|\mathbf{X}|$, we obtain the desired result:
\begin{align*}
&\hspace{-1em}\frac{\sum_{i\in I}\boldsymbol{\mathsf{c}}_{{\:(0:2J)}^{\:p}}(\mathbf{x}_{\:i})}{|\mathbf{X}|}=\frac{\sum_{i\in I}\mathcal{T}^{(K,p)}_{\:\underline{\boldsymbol{J}}}\big(\tilde{\boldsymbol{\mathsf{c}}}^{{\:0;\:(1:2J_{\:1}),\:(1:J_{\:2}),\:\dots,\:(1:J_{\:K})}^{\:p}}(\mathbf{x}_{\:i},\underline{\boldsymbol{J}})\big)}{|\mathbf{X}|}\\
&\Leftrightarrow\frac{\sum_{i\in I}\boldsymbol{\mathsf{c}}_{{\:(0:2J)}^{\:p}}(\mathbf{x}_{\:i})}{|\mathbf{X}|}=\mathcal{T}^{(K,p)}_{\:\underline{\boldsymbol{J}}}\bigg(\frac{\sum_{i\in I}\tilde{\boldsymbol{\mathsf{c}}}^{{\:0;\:(1:2J_{\:1}),\:(1:J_{\:2}),\:\dots,\:(1:J_{\:K})}^{\:p}}(\mathbf{x}_{\:i},\underline{\boldsymbol{J}})}{|\mathbf{X}|}\bigg) \\
&\Leftrightarrow\bar{\boldsymbol{\mathsf{C}}}_{{\:(0:2J)}^{\:p}}(X)=\mathcal{T}^{(K,p)}_{\:\underline{\boldsymbol{J}}}\Big(\bar{\tilde{\boldsymbol{\mathsf{C}}}}^{{\:0;\:(1:2J_{\:1}),\:(1:J_{\:2}),\:\dots,\:(1:J_{\:K})}^{\:p}}(X,\underline{\boldsymbol{J}})\Big).
\end{align*}
\end{proof}

\end{document}